\newtheorem{theorem}{Theorem}
\newtheorem{definition}{Definition}
\newtheorem{proposition}{Proposition}
\newtheorem{lemma}{Lemma}
\newtheorem{corollary}{Corollary}
\newtheorem{remark}{Remark}
\newtheorem{assumption}{Assumption}
\def\BibTeX{{\rm B\kern-.05em{\sc i\kern-.025em b}\kern-.08em
    T\kern-.1667em\lower.7ex\hbox{E}\kern-.125emX}}
\begin{document}
\title{On a Discrete-Time Networked SIV Epidemic Model with Polar Opinion Dynamics}
\author{Qiulin Xu and Hideaki Ishii

\thanks{Q. Xu and H. Ishii are with the Department of Computer Science, Tokyo Institute of Technology, Japan. Emails: xu.q.ab@m.titech.ac.jp, ishii@c.titech.ac.jp. This work was supported in the part by JSPS under Grant-in-Aid for Scientific Research Grant No. 22H01508.}}

\maketitle

\begin{abstract}
This paper studies novel epidemic spreading problems influenced by opinion evolution in social networks, where the opinions reflect the public health concerns. A coupled bilayer network is proposed, where the epidemics spread over several communities through a physical network layer while the opinions evolve over the same communities through a social network layer. The epidemic spreading process is described by a susceptible-infected-vigilant (SIV) model, which introduces opinion-dependent epidemic vigilance state compared with the classical epidemic models. The opinion process is modeled by a polar opinion dynamics model, which includes infection prevalence and human stubbornness into the opinion evolution. By introducing an opinion-dependent reproduction number, we analyze the stability of disease-free and endemic equilibria and derive sufficient conditions for their global asymptotic stability. We also discuss the mutual effects between epidemic eradication and opinion consensus, and the possibility of suppressing epidemic by intervening in the opinions or implementing public health strategies. Simulations are conducted to verify the theoretical results and demonstrate the feasibility of epidemic suppression.
\end{abstract}

\begin{IEEEkeywords}
Epidemic spreading, multi-agent system, opinion dynamics, polar opinions, susceptible-infected-vigilant. 
\end{IEEEkeywords}

\section{Introduction}
Mathematical modelling of infectious diseases has a long history, dating back to Daniel Bernoulli’s work on smallpox in 1760 \cite{bernoulli1760essai}. The main goals of such modelling are to understand the disease spreading mechanisms and to predict the epidemic outcome \cite{nowzari2016analysis,mei2017dynamics,pare2020modeling}. In the past century, epidemic models have been increasingly used by governments as guidance for public health policies \cite{zino2021analysis}. However, new challenges arise with virus variabilities and human complexity. The COVID-19 pandemic caused serious damages worldwide and has highlighted the significance of further research on epidemic modelling \cite{thompson2020epidemiological}.

Compartmental models, which divide populations into distinct states, are widely used in epidemiology to study the spreading dynamicsThe fundamental states are susceptible (S) and infected (I), which are present in essentially every epidemic model \cite{nowzari2016analysis}. The classical SIS model, comprising only these two states, assumes that recovered individuals do not acquire immunity to reinfection and revert to the susceptible state \cite{kermack1927contribution}. The SIR model introduces a recovered (R) state with permanent immunity, making it suitable for diseases like chickenpox \cite{brauer2008mathematical}. Networked SIS and SIR models have been extensively studied. The dynamics and convergence properties (i.e., the conditions for disease extinction) of networked SIS models with homogeneous \cite{ahn2013global,wang2021suppressing} and heterogeneous \cite{fall2007epidemiological,pare2020analysis} recovery and infection rates have been explored for both continuous- and discrete-time settings. For SIR models, where epidemics always die out eventually, researchers have focused on transient behaviors like peak infection time and level \cite{hota2021closed}, and strategies to reduce these indicators, i.e., flattening the infection curve \cite{di2020covid,wang2022resilient}. However, the recent COVID-19 pandemic showed that the SIS and SIR models are insufficient to capture certain disease characteristics and human behaviors that affect the epidemic spreading.

Some extended models have been proposed to capture more realistic epidemic spreading dynamics, such as the susceptible-alert-infected-susceptible (SAIS) model \cite{sahneh2011epidemic}, the susceptible-protected-infected-susceptible (SPIS) model \cite{theodorakopoulos2012selfish}, and the generalized susceptible-infected-vigilant (SIV) model \cite{nowzari2014stability,nowzari2015general,bhowmick2020influence}. The A, P, and V states are essentially similar, which provide individuals with temporary immunity against infection. In practice, such temporary immunity has two main sources: passive and active. The passive source is the transition from state I, which means the recovery from infection. The active source is the transition from state S, which reflects the self-protection awareness before infection, such as wearing masks, physical distancing, and getting vaccinated. The vigilant/protective state captures the impact of social awareness \cite{da2019epidemic} and virus variants \cite{holmes2021origins} on epidemic dynamics, which classical SIS and SIR models cannot study. Motivated by the practical significance, this paper focuses on the discrete-time networked SIV model.

In epidemiology, there is a key parameter called the reproduction number, which is the expected number of cases directly generated by one case in a population \cite{ma2020estimating}. The reproduction number can be affected by several factors such as environmental conditions and the behavioral patterns of the population \cite{cinelli2020covid}. In sociology, it is widely studied how opinions affect behavioral choices. In this context, the health belief model is highly relevant. It theorizes that health-related behaviors reflect both the fear of health threats and the expected fear‐reduction potential of taking actions\cite{green2020health}. This implies that the opinion evolution in a social network may affect the reproduction number, and further, the epidemic spreading, especially in this era of the Internet. A severe outbreak may induce panic opinions within some communities, and they will respond actively to control the epidemic by means of public health improvement. Conversely, disregard and denial of epidemic may lead to low vigilance and protection, and may result in persistent epidemics. As discussed above, the active source of vigilant state represents the social awareness in the SIV model, which is consistent with the health belief model. Therefore, we naturally become interested in studying the coupling between a networked SIV epidemic model and a networked opinion dynamics model. 

Studies on coupling between epidemic models and opinion dynamics have emerged as a novel research direction. Researchers in systems control have studied the opinion evolution process in social networks, and proposed several opinion dynamics models to describe the public opinion exchange \cite{proskurnikov2017tutorial}. The DeGroot model is the most fundamental one, where each individual updates his/her opinion by taking a weighted average of all neighbor's opinions \cite{degroot1974reaching}. The works \cite{lin2021discrete} and \cite{she2021peak} have respectively studied how a networked SIS and SIR models are coupled with the DeGroot model. However, the DeGroot model is over-simplified as it always leads to a consensus on a strongly connected graph, while persistent disagreements often happen in the real world. The model by Altafini in \cite{altafini2012consensus} captures the cooperative and antagonistic interactions in opinion exchange dynamics that cause disagreement, and this model has been coupled with a networked SIS model in \cite{she2022networked}. However, this model still ignores the personal preferences of individuals, which usually manifest as stubbornness or prejudice in practice \cite{frasca2013gossips}. The Friedkin–Johnsen model improved the DeGroot model by introducing an initial constant for each individual, which represents his/her stubbornness against external opinions \cite{friedkin1990social}. Polar opinion dynamics proposed in \cite{amelkin2017polar} further extended this stubbornness as a function of individual's current opinion. For instance, individuals with extreme opinions may be much harder to convince than neutral individuals. Our work in this paper uses the polar opinion dynamics to model the people's beliefs and awareness towards the severity of the epidemics.

The main contributions of this work are as follows: First, we propose a networked SIV epidemic model coupled with polar opinion dynamics. Unlike traditional epidemic models \cite{she2022networked,nowzari2014stability}, we link people’s health opinions to their health-promoting behaviors, and consider how opinions depend on peer influence, individual stubbornness, and infection levels. The epidemic spreading and opinion dynamics evolve on a bilayer multi-agent network topology, which characterizes how epidemic spreading and opinion evolution interact over a large population. Second, we define an SIV-opinion reproduction number ($R_o^V$) to measure the severity of the epidemic. Our key theoretical result is to show that if $R_o^V \leq 1$, the epidemic network goes asymptotically to the healthy disease-free equilibrium, and the opinion network asymptotically reaches a consensus that the epidemic is not a threat. For more severe cases with $R_o^V > 1$, we show some conditions under which the epidemic network will asymptotically converge to an endemic equilibrium and the opinion network will have a dissensus on the beliefs towards the severity of the epidemic. Finally, in the context of real-world public health interventions, we propose two potential ways to reduce $R_o^V$ by influencing the opinions in order to control the epidemic. Numerical simulations on a large-scale real-world network validate our results. 

This paper is organized as follows. Section \ref{Section2} introduces the preliminaries for the coupled epidemic-opinion model. Section \ref{Section3} defines the equilibria and the reproduction number of our coupled model, and analyzes the dynamical behaviors of the epidemic and opinions. A numerical example on a network of Japan's prefectures is provided in Section \ref{simulation}, which illustrates the theoretical results and explores the feasibility of controlling the epidemic by intervening in the opinions. Section \ref{Section5} concludes the paper. A preliminary version of this paper has been submitted for conference publication \cite{xu2023discrete}. The current paper contains all proofs for theoretical results, and extensive simulations are carried out as well.

\textit{Notation:} Let $[n]$ denote $\{1,2, \ldots, n\}$ for any positive integer $n$. Denote by $\mathbb{R}^n$ and $\mathbb{R}^{n \times n}$ the $n$-dimensional Euclidean space and the set of $n \times n$ real matrices, respectively. The superscript ``${\top}$'' stands for transposition of a matrix. Denote by $A\! \succ\! 0$ and $A\! \prec\! 0$ that matrix $A$ is positive definite and negative definite, respectively. Denote by $\rho(\cdot)$, $\|\cdot\|$, and $\|\cdot\|_{\infty}$ the spectral radius, Euclidean norm, and infinity norm of a matrix, respectively. The $n \times n$ identity matrix is given by $I_{n}$, and $\boldsymbol{1}_n$ represents the all-one vector in $\mathbb{R}^n$. For any matrix $M \in \mathbb{R}^{n \times n}$, $M_{ij}$ denotes its $(i,j)$-th entry. For any vector $x \in \mathbb{R}^n$, $x_i$ denotes its $i$-th entry, and $M = \operatorname{diag}\left(x\right) \in \mathbb{R}^{n \times n}$ denotes a diagonal matrix with $M_{ii}\! =\! x_i, \ \forall i \in [n]$. For any two vectors $x, y \in \mathbb{R}^n$, we simply write $x > y$ if $x_i > y_i, \forall i \in [n]$.

\section{Modelling and Problem Formulation} \label{Section2}
In this section, we introduce the networked epidemic model coupled with a polar opinion dynamics model. We consider a situation where an epidemic is spreading over a group of communities. The spreading process is affected by the topology of the physical network and the attitudes of the communities towards the diseases. On the other hand, the opinion of each community changes over time depending on its infection status and the opinions over the social network.

\subsection{Epidemic Dynamics}
In this paper, we follow the idea of the widely studied SIS model and consider its extension to a generalized one that
we are interested in.

Specifically, different from the two-state SIS model, we consider a virus spreading model with three classes of states called the SIV model \cite{nowzari2015general}. This new epidemic spreading model allows each individual to be in one of the three classes of states: Susceptible $S$, infected class $I^p$ with $p \in \{1, \ldots, m_I\}$, and vigilant class $V^q$ with $q \in \{1, \ldots, m_V\}$. A susceptible individual is capable of being infected by its infected neighbours. An infected individual recovers with a certain curing rate and become vigilant. Finally, a vigilant individual is not susceptible or infected by the disease, which can mean that this individual adopts protective actions such as wearing masks, and maintaining social distance, and becomes immune from infection or vaccination. Thus such individuals are not infectious, nor immediately susceptible to be infected. In the infected class and vigilant class, an individual can be classified into any of the $m_I/m_V$ states; this allows us to model many variations in practice, including severity of the disease (incubation or symptomatic), different sources of the vigilance (vaccines, recovery or protective actions), and so on. For simplicity, we consider the case with $m_I = m_V = 1$ in this paper. Note that based on the results of \cite{nowzari2015general}, other cases can be analyzed using a similar methodology.

Consider a physical interaction network of $n$ connected communities represented by the directed graph $\mathcal{G}_D=\left(\mathcal{V}, \mathcal{E}_D \right)$, where the node set $\mathcal{V}= [n]$ represents $n$ disjoint communities and the edge set $\mathcal{E}_D \subseteq\! \mathcal{V} \times \mathcal{V}$ represents the disease spreading interactions over $\mathcal{V}$. A directed edge $\left(j, i\right)$ indicates that community $j$ can infect community $i$. Denote by $\mathcal{N}_i^D=\left\{j \mid\left(j, i\right) \in \mathcal{E}_D\right\}$ the set of the neighbors of community $i$\footnotemark.

\footnotetext{A directed graph is used because the rates of infection can be different for a pair of communities depending on the direction (indicated by $\beta_{ij}$). Moreover, the graph itself should always be bidirectional since physical interaction among people cannot be one way.}
\begin{figure}
	\begin{center}
		\includegraphics[width=1.5in]{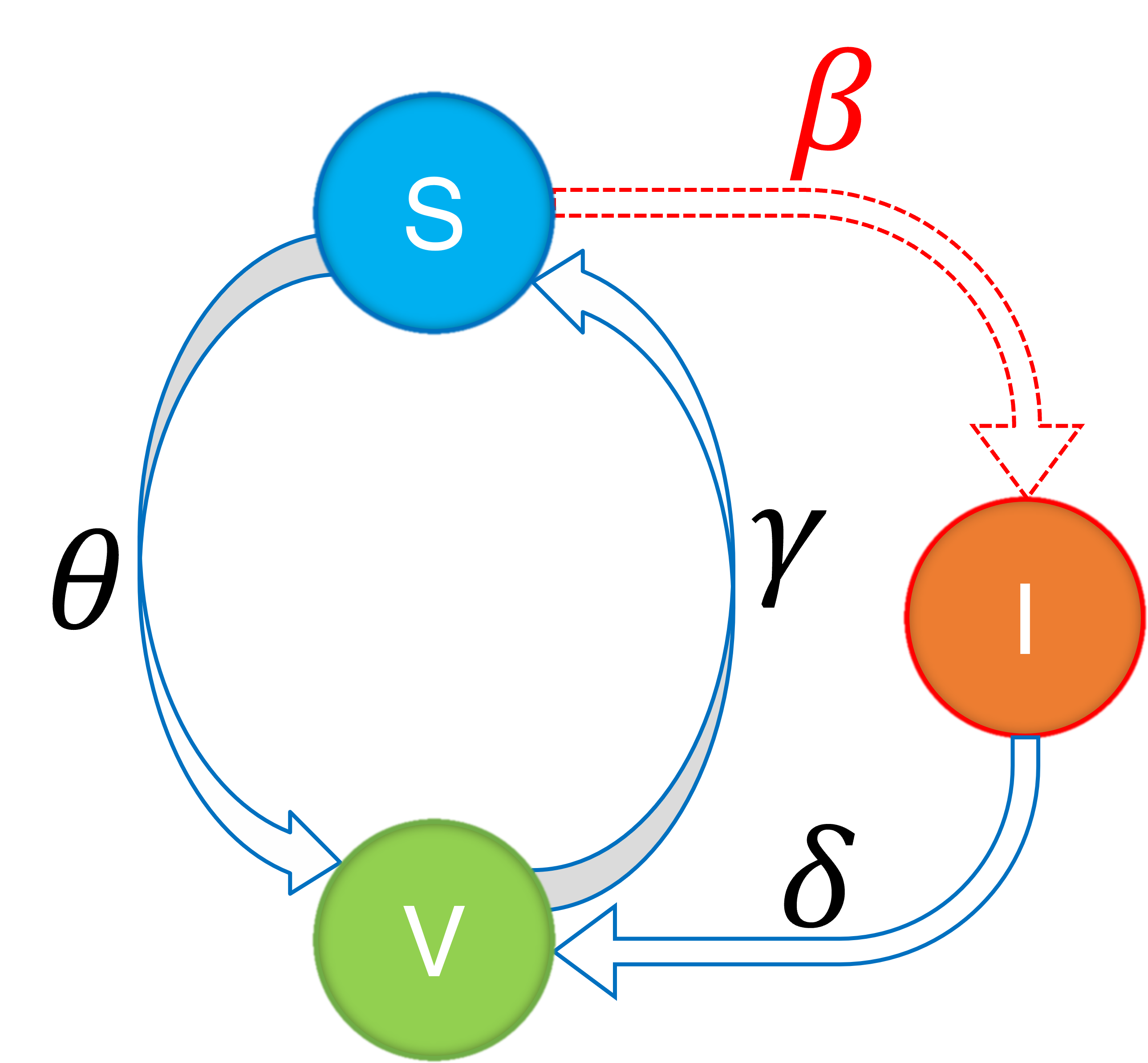}    
		\caption{SIV epidemic model with three states and various transition parameters.}  
		\label{fig0}                                 
	\end{center}                                 
\end{figure}

Fig.~\ref{fig0} shows the three-state SIV epidemic transmission model where the transitions between different compartments are shown with arrows. The proportions of the susceptible, infected, and vigilant population in community $i$ at continuous time $t \geq 0$ are denoted, respectively, by $x^S_i(t)$, $x^I_i(t)$,  and $x^V_i(t)$. Note that for all $i \in [n]$ and $t \geq 0$, it holds that $x^S_i(t), x^I_i(t), x^V_i(t) \in[0,1]$ and $x^S_i(t)+x^I_i(t)+x^V_i(t)=1$. Then, similar to the construction of the networked SIS model in \cite{nowzari2016analysis}, the SIV epidemic dynamics to capture the evolution of the $n$ communities is given by
\begin{equation}
	\begin{aligned}
		\dot{x}_i^S(t)= & \ \gamma_i x^V_i(t)-\theta_i x^S_i(t) -x^S_i(t)\sum_{j \in \mathcal{N}_i^D} \beta_{ij} x^I_j(t), \\
		\dot{x}_i^I(t)= & \ x^S_i(t)\sum_{j \in \mathcal{N}_i^D} \beta_{ij} x^I_j(t)-\delta_i x^I_i(t), \\
		\dot{x}_i^V(t)= & \ \delta_i x^I_i(t)+\theta_i x^S_i(t)-\gamma_i x^V_i(t). \label{SIV-continuous}
	\end{aligned}
\end{equation}
The transition parameters are given as follows: $\beta_{ij} \in[0,1]$ denotes the average infection rate from community $j$ to community $i$; $\delta_i \in[0,1]$ denotes the average recovery rate of the infected population in community $i$; $\gamma_i \in[0,1]$ is the average susceptibility rate of the vigilant population in community $i$ after despising protective measures or losing immunity; and $\theta_i \in[0,1]$ is the average vigilance rate of the susceptible population in community $i$ to become vigilant.

In this paper, we deal with the system in the discrete-time domain as in the discrete-time SIS model case in \cite{pare2020analysis}. Based on Euler's method, we describe (\ref{SIV-continuous}) in the approximate form with the time index $k\geq 0$ as as 
\begin{equation}
	\begin{aligned}
		x_i^S(k+1)= & \ x_i^S(k) + \gamma_i x^V_i(k)-\theta_i x^S_i(k) \\
		&\ -x^S_i(k)\sum_{j \in \mathcal{N}_i^D} \beta_{ij} x^I_j(k), \\
		x_i^I(k+1)= & \ x_i^I(k)+ x^S_i(k)\sum_{j \in \mathcal{N}_i^D} \beta_{ij} x^I_j(k)-\delta_i x^I_i(k), \\
		x_i^V(k+1)= & \ x_i^V(k) + \delta_i x^I_i(k)+\theta_i x^S_i(k)-\gamma_i x^V_i(k). \label{SIV-discrete}
	\end{aligned}
\end{equation}
Without loss of generality, we take sampling period $\Delta T = 1$ in this paper. In fact, if $\Delta T$ is not $1$, the value of parameters $\beta_{ij}, \delta_{i}, \gamma_{i}, \theta_{i}$ can be replaced by $\beta_{ij} \Delta T, \delta_{i} \Delta T, \gamma_{i} \Delta T, \theta_{i} \Delta T$ and then the dynamics is still (\ref{SIV-discrete}).

\subsection{Opinion Dynamics}
For community $i$, its opinion towards the epidemic severity  at time $k$ is denoted as $o_i(k) \in[0,1], i \in [n]$. With $o_i(k)=1$, community $i$ believes that the epidemic is extremely serious, and with $o_i(k)=0$, community $i$ perceives that the epidemic does not pose any threat. 

In general, a community may not always hold its original opinion in a social network. Each community evaluates and changes its opinion by comparison with the opinions of its neighbors. Consider the opinions evolving through a social network of $n$ connected communities, which is represented by the directed graph $\mathcal{G}_O=\left(\mathcal{V}, \mathcal{E}_O \right)$. Similar to the definition of the physical network $\mathcal{G}_D$, the neighbor set of community $i$ in this social network $\mathcal{G}_O$ is defined as $\mathcal{N}_i^O$.

In this paper, we focus on the following polar opinion dynamics with stubborn positives:
\begin{equation} \label{opinion} 
	o_i(k+1)=o_i(k) + (1-o_i(k))\sum_{j \in \mathcal{N}_i^O} w_{i j} (o_j(k)-o_i(k)), 
\end{equation}
where the weight $w_{i j} \geq 0$ measures the amount of relative influence of community $j$ upon community $i$. Assume that $\sum_{j} w_{i j} = 1$ for all $i \in[n]$. Then let $W=\left[w_{i j}\right] \in \mathbb{R}^{n \times n}$ be the row-stochastic adjacency matrix of the social network, and let $L=I_n-W$ be the network's Laplacian matrix.

This model (\ref{opinion}) considers the human stubbornness in opinion dynamics. Note that the main difference between the dynamics (\ref{opinion}) and the classical DeGroot model\! \cite{degroot1974reaching} is the term $1-o_i(k)$. This term implies that extreme opinions on one end of the spectrum are more resistant to changes than the other end. This model captures the cases when the communities at one negative extreme of the opinion spectrum may be more open to the opposite opinion, while the communities with the opposite, positive, opinions have an incentive to maintain their positions \cite{amelkin2017polar}. In this paper, we assume that $o_i(k)=1$ represents stubborn positives based on the theory of mass panic \cite{mawson2017mass}.

\subsection{Coupled Epidemic-Opinion Dynamics}
After introducing the networked SIV epidemic model and the polar opinion model spreading over the same set of $n$ communities, it is natural to consider a network dynamical model that couples the two models together. 

First, based on the health belief model in health behavior research \cite{green2020health}, it is reasonable to expect that a community’s opinion or attitude toward the epidemic severity will affect its actions of adopting protective behaviors. For example, a community being very serious about the epidemic may propagate the dangers of the epidemic more widely and frequently, and tend to make stricter policies to prevent the epidemic. People in such a community may also be more likely to adhere to protective behaviors, follow scientific instructions, and get vaccinated actively. These actions may lead to higher vigilance rate $\theta$ and lower susceptibility rate $\gamma$ in the community. To describe this dependence, we take the rates $\theta_i$ and $\gamma_i$ to be functions of opinion $o_i(k)$ as $\theta_i(o_i(k))$ and $\gamma_i(o_i(k))$. 

\textit{Analysis of $\theta_i(o_i(k))$ and $\gamma_i(o_i(k))$:} We do not give any specifics about mathematical forms of $\theta_i(o_i(k))$ and $\gamma_i(o_i(k))$ here. They may vary in different social environments and need a case-by-case analysis based on sociopsychological studies \cite{bechler2021attitude,ajzen1977attitude,verplanken2022attitudes}. For example, with the same opinion level $o_i(k)$, a community with better sanitary conditions and sounder public health policies will have larger $\theta_i(o_i(k))$ and smaller $\gamma_i(o_i(k))$ than others. Nevertheless, despite the lack of mathematical forms, our results in this paper are derived independently of any particular mathematical forms, only relying on the value ranges of $\theta_i(o_i(k))$ and $\gamma_i(o_i(k))$ as well as the property that $\gamma_i(o_i(k)) + \theta_i(o_i(k))$ is strictly larger than $0$ for all $k \geq 0$. The latter condition is reasonable since $\gamma$ and $\theta$ should generally have inverse gradients with respect to $o$ in reality.

On the other hand, the proportion of infection in a community can certainly affect its opinion on how severe the epidemic is. Consider the following opinion dynamics model of community $i$, which captures the original opinion model (\ref{opinion}) and its current infection level in (\ref{SIV-discrete}):
\begin{equation} 
	\begin{aligned}
	o_i(k+1)=&\ \phi_i x_i^I(k) + (1-\phi_i)\left[o_i(k) \right.\\
	&+ (1-o_i(k))\sum_{j \in \mathcal{N}_i^O} w_{i j} (o_j(k)-o_i(k))\left.\right], \label{opinion-coupled}
	\end{aligned}
\end{equation}
where $\phi_i \in(0,1)$ is a given constant. The second term on the right-hand side of (\ref{opinion-coupled}) is from (\ref{opinion}). The neighbors of community $i$ influence its opinion following the polar model with stubborn positives. The first term captures the impact of community $i$'s infection level on its opinion. For instance, if $o_i(k+1)$ obtained by (\ref{opinion}) is small but the community is severely infected, i.e., $x_i^I(k)$ is large, then $o_i(k+1)$ will increase in (\ref{opinion-coupled}). This model is consistent with the health belief model in \cite{green2020health}.

\subsection{Problem Statement}
In epidemiology, the basic reproduction number $R_0$ is the key metric to describe cases directly generated by one case without intervention. Now that we have proposed a coupled SIV-opinion model (\ref{SIV-discrete}) and (\ref{opinion-coupled}), we proceed to introduce an opinion-dependent reproduction number to describe the epidemic spreading in this model. Specifically, we will analyze the system equilibria and their stability conditions.

\section{Analysis of SIV-Opinion Dynamical Model} \label{Section3}
This section considers well-posedness and the equilibria of the SIV-opinion dynamical model. Furthermore, we analyze stability conditions of our model and discuss the mutual influence between epidemic spreading and opinion evolution.

\subsection{Well-Posedness} \label{section3A}
For our coupled epidemic-opinion model to be well posed, its solutions must remain in the state space $[0,1]^{n}$. To this end, we pose three assumptions related to the graphs, transition parameters, and initial states.

\begin{assumption} \label{connection}
	Both the physical interaction graph $\mathcal{G}_D$ and the social graph $\mathcal{G}_O$ are strongly connected. \label{assumption3}
\end{assumption}

\begin{assumption}
	For all $i, j \in [n]$ and $k \geq 0$, it holds that $\delta_i, \beta_{ij}, \theta_i(o_i(k)), \gamma_i(o_i(k)) \in [0,1]$, \!$\sum_{j \in \mathcal{N}_i^D} \beta_{ij} + \theta_i(o_i(k)) \leq 1$, and $\theta_i(o_i(k))+ \gamma_i(o_i(k)) \geq c$ for some constant $c \in (0, 1)$. \label{assumption2}
\end{assumption}
\vspace{-0.6cm}
\begin{assumption} 
	For all $i \in[n]$, it holds that $x_i^S(0), x_i^I(0),$ $x_i^V(0), o_i(0) \in[0,1]$ and $x^S_i(0) + x^I_i(0) + x^V_i(0) = 1$. \label{assumption1}
\end{assumption}

Under the assumptions, we have the following proposition.
\begin{proposition} \label{well posedness}
	For the model in (\ref{SIV-discrete}) and (\ref{opinion-coupled}), the states satisfy $x_i^S(k), x_i^I(k),$ $x_i^V(k), o_i(k) \in[0,1]$ for all $i \in[n]$ and $k \geq 0$.
\end{proposition}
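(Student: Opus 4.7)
The proof naturally proceeds by induction on $k$, with the base case $k=0$ handed to us by Assumption \ref{assumption1}. For the inductive step, I need to establish four things at time $k+1$ given that they hold at time $k$: (i) the mass-balance identity $x_i^S(k+1)+x_i^I(k+1)+x_i^V(k+1)=1$, (ii) non-negativity of each SIV component, (iii) each SIV component being at most $1$, and (iv) $o_i(k+1)\in[0,1]$. The conservation identity (i) is immediate: summing the three update equations in (\ref{SIV-discrete}) collapses all transition terms in telescoping fashion, leaving $x_i^S(k)+x_i^I(k)+x_i^V(k)=1$ by the inductive hypothesis. Once (i) and (ii) are shown, the upper bound (iii) follows for free, since each coordinate is non-negative and the three sum to $1$.

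For non-negativity, I would rewrite each update to collect the coefficient multiplying the ``self'' state and verify it is non-negative under Assumption \ref{assumption2}. Concretely,
\begin{equation*}
x_i^S(k+1)=x_i^S(k)\Bigl(1-\theta_i(o_i(k))-\sum_{j\in\mathcal{N}_i^D}\beta_{ij}x_j^I(k)\Bigr)+\gamma_i(o_i(k))x_i^V(k),
\end{equation*}
and since $x_j^I(k)\in[0,1]$ with $\sum_j\beta_{ij}+\theta_i(o_i(k))\le 1$, the parenthesized coefficient is non-negative, so $x_i^S(k+1)\ge 0$. The expressions
\begin{equation*}
x_i^I(k+1)=(1-\delta_i)x_i^I(k)+x_i^S(k)\sum_{j\in\mathcal{N}_i^D}\beta_{ij}x_j^I(k),\quad x_i^V(k+1)=(1-\gamma_i(o_i(k)))x_i^V(k)+\delta_i x_i^I(k)+\theta_i(o_i(k))x_i^S(k),
\end{equation*}
are manifestly non-negative because $\delta_i,\gamma_i(o_i(k))\in[0,1]$ and every remaining factor is non-negative.

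For the opinion component, I would first bound the inner polar update $g_i(k):=o_i(k)+(1-o_i(k))\sum_{j\in\mathcal{N}_i^O}w_{ij}(o_j(k)-o_i(k))$ in $[0,1]$. Using row-stochasticity to set $s_i(k):=\sum_j w_{ij}o_j(k)\in[0,1]$ and $u:=o_i(k)$, this quantity rewrites as $u^2+(1-u)s_i(k)$, which is clearly non-negative and is bounded above by $u+(1-u)=1$ because $u^2\le u$ and $(1-u)s_i(k)\le 1-u$. Thus $g_i(k)\in[0,1]$, and since $x_i^I(k)\in[0,1]$ by the already-proved inductive step for the SIV states, the update $o_i(k+1)=\phi_i x_i^I(k)+(1-\phi_i)g_i(k)$ is a convex combination of values in $[0,1]$ and therefore lies in $[0,1]$.

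The only step that takes a moment of thought is (iv): the polar factor $(1-o_i(k))$ could in principle push $g_i(k)$ outside $[0,1]$, and the neatest route is the algebraic identity $g_i(k)=u^2+(1-u)s_i(k)$, which makes both bounds transparent. Everything else is routine bookkeeping that exploits Assumption \ref{assumption2} precisely in the combination $\theta_i+\sum_j\beta_{ij}\le 1$ for $x_i^S$ and $\delta_i,\gamma_i\le 1$ for $x_i^I,x_i^V$. Note that Assumption \ref{assumption3} (strong connectivity) and the lower bound $\theta_i+\gamma_i\ge c$ are not needed here — those will be invoked later for the stability analysis rather than for well-posedness.
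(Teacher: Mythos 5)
Your proposal is correct and follows essentially the same argument as the paper: induction with the same rearrangement of (\ref{SIV-discrete}) into nonnegative-coefficient form, mass conservation by summing the three updates, and bounding the polar opinion update $o_i^2(k)+(1-o_i(k))\sum_j w_{ij}o_j(k)$ in $[0,1]$ via row-stochasticity before taking the convex combination with $x_i^I(k)$. The paper phrases the last bound as a convex combination of $o_i(k)$ and $\sum_j w_{ij}o_j(k)$ rather than your $u^2\le u$ estimate, but this is the same computation.
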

\begin{proof}
	We show the well-posedness of the coupled epidemic-opinion model by induction. Suppose that at time $k$, $x_i^S(k), x_i^I(k),$ $x_i^V(k), o_i(k) \in[0,1]$ and $x^S_i(k) + x^I_i(k) + x^V_i(k) = 1$. Rearranging (\ref{SIV-discrete}), we have
	\begin{equation}
		\begin{aligned}
			x_i^S(k+1)= & \ \bigg(1 - \sum_{j \in \mathcal{N}_i^D} \beta_{ij} x^I_j(k) - \theta_i(o_i(k))\bigg) x_i^S(k) \\
			& \ + x_i^V(k)\gamma_i(o_i(k)),\\
			x_i^I(k+1)= & \ (1-\delta_i)x_i^I(k) + x^S_i(k)\sum_{j \in \mathcal{N}_i^D} \beta_{ij} x^I_j(k), \\
			x_i^V(k+1)= & \ (1-\gamma_i(o_i(k)))x_i^V(k) \\
			& \ + \delta_i x^I_i(k)+\theta_i(o_i(k)) x^S_i(k). 
		\end{aligned} \label{posedness}
	\end{equation}
By Assumptions \ref{assumption2} and \ref{assumption1}, it holds $x_i^S(k+1), x_i^I(k+1), x_i^V(k+1) \geq 0$. Moreover, by summing the left- and right-hand sides of the three equations in (\ref{posedness}) respectively, we obtain $x^S_i(k+1) + x^I_i(k+1) + x^V_i(k+1) = x^S_i(k) + x^I_i(k) + x^V_i(k) = 1$. Therefore, $x_i^S(k+1), x_i^I (k+1), x_i^V(k+1) \in [0,1]$.


Then consider the range of $o_i(k+1)$. From (\ref{opinion-coupled}), we have
$$
	\begin{aligned}
		o_i(k+1)=\ \phi_i x_i^I(k) + (1-\phi_i)\bar{o}_i(k+1),
	\end{aligned}
$$
where 
$$
\bar{o}_i(k+1) = o_i^2(k) \\
+ (1-o_i(k))\big(o_i(k) + \sum_{j \in \mathcal{N}_i^O} w_{i j} (o_j(k)-o_i(k))\big).
$$
Since $W$ is a row-stochastic matrix, we have $o_i(k) + \sum_{j \in \mathcal{N}_i^O} w_{i j} (o_j(k)-o_i(k)) \in [0,1]$. Note that $\bar{o}_i(k+1)$ is a convex combination of $o_i(k)$ and $o_i(k) + \sum_{j \in \mathcal{N}_i^O} w_{i j} (o_j(k)-o_i(k))$, and $o_i(k+1)$ is also a convex combination of $x_i^I(k)$ and $\bar{o}_i(k+1)$. Therefore, it holds $o_i(k+1) \in [0,1]$.
\end{proof}

\subsection{Equilibria of the Coupled Model}
Due to the constraint that $x^S_i(k) + x^I_i(k) + x^V_i(k) = 1$ for all $i \in[n]$ and $k \geq 0$, one of the equations in (\ref{SIV-discrete}) is redundant. By setting $x^S_i(k) = 1- x^I_i(k) - x^V_i(k)$, the coupled SIV-opinion model can be described by
\begin{equation}
	\begin{aligned}
		x_i^I(k+1)= & \ x_i^I(k) - \delta_i x^I_i(k) \\
		&\ + (1- x^I_i(k) - x^V_i(k))\sum_{j \in \mathcal{N}_i^D} \beta_{ij} x^I_j(k), \\
		x_i^V(k+1)= & \ x_i^V(k) + \delta_i x^I_i(k) - \gamma_i(o_i(k)) x^V_i(k) \\
		& \ + \theta_i(o_i(k)) (1- x^I_i(k) - x^V_i(k)), \\
		o_i(k+1)=&\ \phi_i x_i^I(k) + (1-\phi_i)\left[o_i(k) \right.\\
		&+ (1-o_i(k))\sum_{j \in \mathcal{N}_i^O} w_{i j} (o_j(k)-o_i(k))\left.\right]. \label{coupled model}
	\end{aligned}
\end{equation}
To study the system (\ref{coupled model}), let (${x_i^I}^*$, ${x_i^V}^*$, $o_i^*$) denote an equilibrium state of the three equations.

\begin{definition}
	An equilibrium state $z^* = ({x^I}^*, {x^V}^*,o^*)$ of the coupled SIV-opinion model (\ref{coupled model}) is said to be 
	\begin{enumerate}
		\item a healthy state if ${x^I}^*=0$, and an endemic state otherwise;
		\item a consensus state if $o_i^* = o_j^*, \forall i, j \in[n]$, and a dissensus state otherwise.
	\end{enumerate}
\end{definition}

\subsection{Stability Analysis of Disease-Free Equilibrium} \label{Disease-Free Equilibrium}
In practice, achieving the healthy state, which means to reach a disease-free equilibrium, should be the most worth exploring scenario. To further analyze stability conditions of disease-free equilibria, we state a few preliminaries.

In epidemiology,  the basic reproduction number, denoted
by $R_0$, is a critical parameter to measure the spreading of
an epidemic \cite{ma2020estimating}. It can be affected by other factors such as pathogen types and the behaviours of the population. Thus, we define a specific reproduction number $R_o^V$ to characterize the infectivity of the SIV-opinion model (\ref{coupled model}).
\begin{definition}[SIV-opinion reproduction number] \label{definition}
	For the coupled SIV-opinion model in (\ref{coupled model}), the reproduction number is defined as
	$$ 
		R_o^V = \rho\left(I_n - \Delta + B - \Psi B\right),
	$$
	where 
	\begin{align} 
		\Psi &= \operatorname{diag}\left( \min _{o_i \in [\underline{o}_i, \overline{o}_i]}  \psi_i(o_i)\right), \label{Psi} \\
		\psi_i(o_i) &=  \frac{\theta_i(o_i)}{\gamma_i(o_i) + \theta_i(o_i)}, i \in[n], \label{chi}
	\end{align}
	$B=\left[\beta_{i j}\right] \in \mathbb{R}^{n \times n}$, and $D=\operatorname{diag}\left(\delta_1, \ldots, \delta_n\right)$, with $\underline{o}_i$ and $\overline{o}_i$ being the lower and upper bounds of $o_i(k)$ for $k \geq 0$, respectively.
\end{definition}
Compared to classical networked SIS models \cite{nowzari2016analysis, pare2020analysis}, in which reproduction numbers depend only on infection and recovery rates, Definition \ref{definition} incorporates the influence of opinions by introducing the parameter $\Psi$ into the reproduction number. Equations (\ref{Psi}) and (\ref{chi}) indicate that, throughout opinion evolution, higher level of vigilance towards the epidemic, i.e., larger $\theta(o)$ or smaller $\gamma(o)$, leads to a smaller reproduction number $R_o^V$, and vice versa. We then proceed to analyze the system behavior in (\ref{coupled model}) through bounds on $R_o^V$. The following lemma will be employed in deriving the main result later.
\begin{lemma} \label{lemma-1}
	For the the coupled SIV-opinion model in (\ref{coupled model}), $o(k)$ asymptotically converges to $0$ if $x^I(k)$ asymptotically converges to $0$.
%
\end{lemma}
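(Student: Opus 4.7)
The plan is to reduce the claim to a scalar contraction on $M(k) := \|o(k)\|_\infty$ driven by $X(k) := \|x^I(k)\|_\infty$. First I would rewrite the bracketed term in (\ref{opinion-coupled}) as
\begin{equation*}
o_i(k) + (1 - o_i(k))\sum_{j \in \mathcal{N}_i^O} w_{ij}\bigl(o_j(k) - o_i(k)\bigr) = o_i(k) \cdot o_i(k) + (1 - o_i(k))\sum_{j \in \mathcal{N}_i^O} w_{ij}\, o_j(k),
\end{equation*}
exhibiting it as a convex combination of $o_i(k)$ and $\sum_{j} w_{ij}\, o_j(k)$ with weights $o_i(k)$ and $1 - o_i(k)$. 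Row-stochasticity of $W$ together with Proposition \ref{well posedness} forces each of these two components to lie in $[0, M(k)]$, so the entire bracket is bounded above by $M(k)$.

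Plugging this bound back into (\ref{opinion-coupled}) gives, for every $i$,
\begin{equation*}
o_i(k+1) \leq \phi_i X(k) + (1 - \phi_i) M(k).
\end{equation*}
Because the map $\phi \mapsto \phi X(k) + (1-\phi) M(k) = M(k) + \phi\bigl(X(k) - M(k)\bigr)$ is monotone in $\phi$ and each $\phi_i \in (0,1)$, setting $\phi_{\min} := \min_i \phi_i > 0$ and taking the maximum over $i$ delivers the scalar recursion
\begin{equation*}
M(k+1) \leq \max\bigl\{X(k),\; (1 - \phi_{\min}) M(k) + \phi_{\min} X(k)\bigr\}.
\end{equation*}

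Finally I would close with a standard $\varepsilon$-argument. Fix $\varepsilon > 0$; by the hypothesis $x^I(k) \to 0$ there exists $K$ with $X(k) \leq \varepsilon$ for all $k \geq K$. If $M(k) \leq \varepsilon$, both candidates inside the maximum are at most $\varepsilon$, so $M(k+1) \leq \varepsilon$; if $M(k) > \varepsilon$, then $X(k) < M(k)$ forces the second candidate to dominate, giving the geometric contraction $M(k+1) - \varepsilon \leq (1 - \phi_{\min})\bigl(M(k) - \varepsilon\bigr)$. Iterating yields $\limsup_{k} M(k) \leq \varepsilon$, and since $\varepsilon$ was arbitrary, $o(k) \to 0$, as claimed. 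The only genuinely delicate step is the first: the nonlinear factor $(1 - o_i(k))$ rules out a direct DeGroot-style max-norm bound, and the trick is to reinterpret the update as a true convex combination so that the usual contraction machinery applies; the remaining steps are routine bookkeeping.
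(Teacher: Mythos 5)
Your proof is correct and follows essentially the same route as the paper's: your convex-combination rewriting of the bracket is exactly the paper's observation that $\widetilde{W}(o(k)) = (I_n - O(k))W + O(k)$ is row stochastic, and both arguments hinge on the same max-norm contraction with factor $1-\phi_{\min}$. The only difference is cosmetic: the paper finishes by appealing to input-to-state stability of the unforced system, while you carry that step out explicitly with the $\varepsilon$-argument, which makes the conclusion self-contained but does not change the underlying approach.
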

\begin{proof}
	We write the third equation of (\ref{coupled model}) in a compact form:
	\begin{equation} \label{compact o}
		o(k+1) = \Phi x^I(k) + (I_n - \Phi) \widetilde{W}(o(k)) o(k),
	\end{equation}
	where
	$$
	\widetilde{W}(o(k)) = W + O(k) L  = (I_n - O(k))W + O(k),
	$$
	$\Phi = \operatorname{diag}\left(\phi_1, \ldots, \phi_n\right)$, and $O(k)=\operatorname{diag}\left(o_1(k), \ldots, o_n(k)\right)$. Since the adjacency matrix $W$ is row stochastic and $o_i(k) \in[0,1]$, $\widetilde{W}(o(k))$ is also row stochastic. 
	
	We now consider the stability of the system
	\begin{equation} \label{o_bar}
		\tilde{o}(k+1) = (I_n - \Phi) \widetilde{W}(\bar{o}(k)) \tilde{o}(k).
	\end{equation}
	We use $\tilde{o}_{\max}(k)$ and $\phi_{\min}$ to denote $\max _{i \in [n]} \tilde{o}_i(k)$ and $\min _{i \in [n]} \phi_i$, respectively. From the row-stochasticity of $\widetilde{W}(o(k))$, it follows that
	$$
		\tilde{o}_{\max}(k+1) \leq (1-\phi_{\min}) \tilde{o}_{\max}(k).
	$$
	Since $\phi_i \in(0,1)$, $\tilde{o}_{\max}(k)$ will exponentially converge to $0$ for all initial values. Consequently, $\tilde{o}(k)$ in (\ref{o_bar}) will exponentially converge to $0$ for all initial conditions in $[0,1]^n$ from the well-posedness result of Theorem \ref{well posedness}. 
	
	Then, consider $x^I(k)$ as the external input of system (\ref{compact o}). It follows immediately from the stability of (\ref{o_bar}) that (\ref{compact o}) is input-to-state stable. Hence, with $x^I(k)$ converging to $0$ asymptotically, $o(k)$ asymptotically goes to $0$ in (\ref{compact o}).
\end{proof}

\begin{remark}
	Lemma \ref{lemma-1} reveals a social state that corresponds to the extinction of the epidemic. When the epidemic dies out for whatever reason, the society will end with a consensus-healthy equilibrium $z^* = (0, {x^V}^*,0)$. This situation implies that, if the epidemic is eradicated, all communities will reach an agreement that the epidemic poses no threat.
\end{remark}

Now we are ready to prove the stability of the disease-free equilibrium. First, we define a particular equilibrium of the vigilant state as
$$
	{\hat{V}}^* = \Theta(0)(\Gamma(0)+\Theta(0))^{-1} \boldsymbol{1}_n,
$$
where $\Theta(o) = \operatorname{diag}\left(\theta_1(o_1), \ldots, \theta_n(o_n)\right)$ and $\Gamma(o) = \operatorname{diag}\left(\gamma_1(o_1), \ldots, \gamma_n(o_n)\right)$. Then in the following theorem, a sufficient condition for the global stability of the disease-free equilibrium will be established. To simplify notations, $\gamma_i(o_i(k))$, $\theta_i(o_i(k))$, $\Gamma(o(k))$, $\Theta(o(k))$ and $\gamma_i(k)$, $\theta_i(k)$, $\Gamma(k)$, $\Theta(k)$ can substitute each other in the rest of this paper. 
\begin{theorem} \label{main result-health}
	If $R_o^V \leq 1$, the healthy-consensus state $z^* = (0, {\hat{V}}^*, 0)$ of the system in (\ref{coupled model}) is globally asymptotically stable.
\end{theorem}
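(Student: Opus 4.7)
The plan is to establish convergence in three stages: (I) $x^I(k) \to 0$; (II) $o(k) \to 0$ via Lemma~\ref{lemma-1}; and (III) $x^V(k) \to \hat{V}^*$. For Stage~(I), note that under Assumption~\ref{assumption2} the matrix $M_{\max} := I_n - D + (I_n - \Psi) B$ is entrywise non-negative, and under Assumption~\ref{assumption3} it is irreducible. Perron-Frobenius then supplies a strictly positive left eigenvector $v > 0$ with $v^\top M_{\max} = R_o^V v^\top$. I would take the Lyapunov candidate $V(k) := v^\top x^I(k) \geq 0$, and writing the infection dynamics compactly as $x^I(k+1) = [(I_n - D) + (I_n - X^I(k) - X^V(k)) B] x^I(k)$ with $X^I(k) := \operatorname{diag}(x^I(k))$ and $X^V(k) := \operatorname{diag}(x^V(k))$, the eigenvector identity yields
\begin{equation*}
V(k+1) - V(k) = (R_o^V - 1) V(k) + v^\top [\Psi - X^I(k) - X^V(k)] B\, x^I(k).
\end{equation*}

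The key sub-claim is $\liminf_k (x^I_i(k) + x^V_i(k)) \geq \Psi_{ii}$ for each $i$, which renders the second term asymptotically non-positive (since $v_i > 0$ and $(Bx^I)_i \geq 0$). I would obtain this from the susceptible dynamics: substituting $x^V_i = 1 - x^S_i - x^I_i$ and dropping non-negative terms yields the comparison inequality $x^S_i(k+1) \leq (1 - \theta_i(k) - \gamma_i(k)) x^S_i(k) + \gamma_i(k)$; iterating together with the uniform lower bound $\theta_i + \gamma_i \geq c > 0$ from Assumption~\ref{assumption2} and the pointwise bound $\psi_i(o_i(k)) \geq \Psi_{ii}$ given by the definition (\ref{Psi}) of $\Psi$ gives $\limsup_k x^S_i(k) \leq 1 - \Psi_{ii}$. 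Consequently, for every $\epsilon > 0$ there exists $K_\epsilon$ such that $V(k+1) \leq (R_o^V + \epsilon c_1) V(k)$ for $k \geq K_\epsilon$, where $c_1 := \max_i(v^\top B)_i / \min_i v_i$ depends only on $v$ and $B$. When $R_o^V < 1$, choosing $\epsilon$ small enough yields a strict contraction, so $V(k) \to 0$ exponentially, and since $v > 0$ this forces $x^I(k) \to 0$. For the boundary case $R_o^V = 1$, a LaSalle-type invariance argument applies: $V$ is ultimately non-increasing modulo vanishing perturbations, and the irreducibility of $B$ forces every point of the $\omega$-limit set to satisfy $x^I = 0$.

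Once $x^I(k) \to 0$ is established, Lemma~\ref{lemma-1} yields $o(k) \to 0$, so $\gamma_i(o_i(k)) \to \gamma_i(0)$ and $\theta_i(o_i(k)) \to \theta_i(0)$; the $x^V$ recursion then becomes asymptotically autonomous and, by Assumption~\ref{assumption2}, contracts to the unique globally attracting fixed point $\hat{V}^*_i = \theta_i(0)/(\gamma_i(0) + \theta_i(0))$. The main obstacle is the critical case $R_o^V = 1$, where the Lyapunov decrement degenerates; ruling out persistent endemic behavior in the $\omega$-limit set hinges on the invariance argument combined with the irreducibility of $B$ inherited from the strong connectivity of $\mathcal{G}_D$.
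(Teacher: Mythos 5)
Your route is genuinely different from the paper's: you argue the infection part directly with a Perron left-eigenvector Lyapunov function $V=v^\top x^I$, whereas the paper bounds $x^I$ by the comparison system driven by $\bar{\mathcal{R}}=I_n-\Delta+(I_n-\Psi)B$ and then invokes Theorem 1 of \cite{pare2020analysis}, finishing with input-to-state-stability arguments for $o$ and for a shifted vigilant coordinate. Your Stages (II) and (III) are fine and essentially parallel the paper (Lemma \ref{lemma-1} plus an asymptotically autonomous contraction for $x^V$, implicitly using continuity of $\theta_i,\gamma_i$ at $0$, as the paper also does). The difficulty is Stage (I), where there are two genuine gaps. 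The first is the key sub-claim $\liminf_k\big(x^I_i(k)+x^V_i(k)\big)\ge\Psi_{ii}$. Your derivation iterates $x^S_i(k+1)\le\big(1-\theta_i(k)-\gamma_i(k)\big)x^S_i(k)+\gamma_i(k)$, but Assumption \ref{assumption2} only gives $c\le\theta_i+\gamma_i\le 2$; when $\theta_i(k)+\gamma_i(k)>1$ the coefficient is negative, so the inequality is not monotone in $x^S_i$ and cannot be iterated as a comparison, and the contraction toward $1-\Psi_{ii}$ breaks: writing $w_i(k)=x^S_i(k)-(1-\Psi_{ii})$, from $w_i(k)<0$ you only get $w_i(k+1)\le\big(\theta_i(k)+\gamma_i(k)-1\big)|w_i(k)|$, which leads to $\limsup_k x^S_i(k)\le(1-\Psi_{ii})\sup_k\big(\theta_i(k)+\gamma_i(k)\big)$, weaker than needed; in the extreme case $\theta_i=\gamma_i=1$ the susceptible fraction oscillates with period two and the sub-claim is simply false. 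So this step needs either an extra standing hypothesis such as $\theta_i(o)+\gamma_i(o)\le 1$, or a more careful (e.g., two-step or averaged) estimate. (For what it is worth, the paper's inequality $0\le x^I(k)\le\bar{x}^I(k)$ implicitly leans on a pointwise bound of the same flavor, so this is a real difficulty of the model rather than a detail you overlooked.)

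The second gap is the critical case $R_o^V=1$, which your own summary flags as the main obstacle but does not resolve. Even granting the sub-claim, it only gives, for every $\epsilon>0$, eventually $V(k+1)\le(1+\epsilon c_1)V(k)$; the cross term $v^\top\big(\Psi-X^I(k)-X^V(k)\big)Bx^I(k)$ is neither vanishing nor sign-definite, so "LaSalle modulo vanishing perturbations" does not apply as stated, and irreducibility of $B$ alone does not force $x^I=0$ on the $\omega$-limit set — with a purely linear Lyapunov function the decrement degenerates exactly at $\rho=1$. Closing this case requires extracting strict decrease from the quadratic damping $-\operatorname{diag}\big(x^I(k)\big)Bx^I(k)$ (this is precisely what the cited result in \cite{pare2020analysis}, which the paper relies on, accomplishes), or else restricting the statement to $R_o^V<1$. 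As written, your proposal proves the theorem (conditionally on the repaired sub-claim) only for $R_o^V<1$, not for $R_o^V\le 1$.
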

\begin{proof}
Note that at the disease-free equilibrium, $x_i^V(k)$ does not necessarily go to $0$ as it depends on the parameters $\gamma_i(k)$ and $\theta_i(k)$. Thus, its corresponding equilibrium ${x_i^V}^*$ is obtained by 
$$
0 = \delta_i {x^I_i}^* - \gamma_i(o_i^*) {x^V_i}^* \ + \theta_i(o_i^*) (1- {x^I_i}^* - {x^V_i}^*),
$$
and then we have 
\begin{equation} \label{xVstar}
	{x^V_i}^* =\frac{\delta_i {x^I_i}^* + \theta_i(o_i^*) (1-{x^I_i}^*)}{\theta_i(o_i^*)+\gamma_i(o_i^*)}.
\end{equation}
Therefore, to simplify the following stability analysis and to replace $x^V_i(k)$, we introduce the new state
\begin{equation}
	\label{coordinate change}
	{e^V_i}(k) = x^V_i(k) - \frac{\delta_i x^I_i(k) + \theta_i(k) \left(1-x^I_i(k)\right)}{\theta_i(k)+\gamma_i(k)}.
\end{equation}
From (\ref{coupled model}) and (\ref{coordinate change}),
$$
\begin{aligned}
	&{e^V_i}(k+1) =  \ x^V_i(k+1) \\
	& \quad\quad\quad\quad\quad - \frac{\delta_i x^I_i(k+1) + \theta_i(k+1) \left(1-x^I_i(k+1)\right)}{\theta_i(k+1)+\gamma_i(k+1)} \\
	&=  \left(1-\gamma_i(k)-\theta_i(k)\right) x_i^V(k)  \\
	& \ + \left[\delta_i - \theta_i(k) + \frac{\left(\theta_i(k+1)-\delta_i\right)(1-\delta_i)}{\theta_i(k+1)+\gamma_i(k+1)}\right] x^I_i(k) \\
	& \ + \frac{\left(\theta_i(k+1)-\delta_i\right)\left(1-x^I_i(k)-x^V_i(k)\right)}{\theta_i(k+1)+\gamma_i(k+1)}  \sum_{j \in \mathcal{N}_i^D} \beta_{ij} x^I_j(k) \\
	& \ + \theta_i(k) - \frac{\theta_i(k+1)}{\theta_i(k+1)+\gamma_i(k+1)} \\
	&= \left(1-\gamma_i(k)-\theta_i(k)\right) e^V_i(k) \\
	& \ + \left[\frac{\delta_i - \theta_i(k)}{\theta_i(k)+\gamma_i(k)} + \frac{\left(\theta_i(k+1)-\delta_i\right)(1-\delta_i)}{\theta_i(k+1)+\gamma_i(k+1)}\right] x^I_i(k) \\
	& \ + \frac{\left(\theta_i(k+1)-\delta_i\right)\left(1-x^I_i(k)-x^V_i(k)\right)}{\theta_i(k+1)+\gamma_i(k+1)}  \sum_{j \in \mathcal{N}_i^D} \beta_{ij} x^I_j(k) \\
	& \ + \frac{\theta_i(k)}{\theta_i(k)+\gamma_i(k)} - \frac{\theta_i(k+1)}{\theta_i(k+1)+\gamma_i(k+1)}.
\end{aligned}
$$
This can be rewritten in the following compact form:
\begin{equation} \label{compact-xe}
	\begin{aligned}
		{e^V}(k+1) =& \left(I_n-\Gamma(k)-\Theta(k)\right) e^V(k) \\
		&+ \Xi(k) x^I(k) - \Upsilon(k)x^I(k) + \kappa(k),
	\end{aligned}
\end{equation}
where 
$$
\begin{aligned}
	\Xi(k) =& \ (\Delta - \Theta(k))(\Theta(k)+\Gamma(k))^{-1} \\ &\!\!+(\Theta(k+1)-\Delta)(I_n-\Delta)(\Theta(k+1)+\Gamma(k+1))^{-1} \\ &\!\!+(\Theta(k+1)-\Delta)(\Theta(k+1)+\Gamma(k+1))^{-1}B, \\
	\Upsilon(k) =& \ (\Theta(k+1)-\Delta)(\Theta(k+1)\\
	&\!\!+\Gamma(k+1))^{-1} \left(x^I(k)+x^V(k)\right)B, \\
	\kappa(k) =& \ (\Theta(k)+\Gamma(k))^{-1} \theta(k)\\
	&\!\!- (\Theta(k+1)+\Gamma(k+1))^{-1} \theta(k+1).
\end{aligned}
$$

Then, using (\ref{coordinate change}) in the first equation of (\ref{coupled model}), one obtains
$$
\begin{aligned}
	{x^I_i}(k+1) =& \ (1-\delta_i)x_i^I(k) + \bigg(1 - \frac{\theta_i(k)}{\theta_i(k)+\gamma_i(k)} \\
	&\!\! - \frac{\delta_i+\gamma_i(k)}{\theta_i(k)+\gamma_i(k)} {x^I_i}(k) -  {e^V_i}(k) \bigg)\!\sum_{j \in \mathcal{N}_i^D}\! \beta_{ij} x^I_j(k),
\end{aligned}
$$
which can be written compactly as
\begin{equation} \label{compact-xI}
	{x^I}(k+1) = \mathcal{R}(k)x^I(k) - \zeta(k) - \iota(k),
\end{equation}
where $\mathcal{R}(k)\! =\! I_n - \Delta + (I_n - \Theta(k)(\Gamma(k)+\Theta(k))^{-1}) B, \zeta(k) \!=\! \operatorname{diag} \left(x^I(k)\right)Bx^I(k)$, and $\iota(k) = \operatorname{diag} \left(x^V(k)\right)Bx^I(k)$.

From Proposition \ref{well posedness}, it follows that $0 \leq \iota(k) \leq B x^I(k)$. Consequently, the following model is considered:
\begin{equation} \label{linear system}
	{\bar{x}^I}(k+1)  = \bar{\mathcal{R}} {\bar{x}^I}(k) -  \zeta(k),
\end{equation}
where $\bar{\mathcal{R}} = I_n - \Delta +( I_n - \Psi) B$. By (\ref{Psi}), we have $0\leq {x^I}(k) \leq {\bar{x}^I}(k)$ for all $k$. Recalling Definition \ref{definition}, we have $\rho\left(\bar{\mathcal{R}}\right) = R_o^V$. Now, the stability analysis in Theorem 1 of \cite{pare2020analysis} can be applied to the system (\ref{linear system}). Then, it follows from $R_o^V \leq 1$ that $\bar{x}^I(k)$ asymptotically converges to $0$ for any initial state ${\bar{x}^I}(0) \in [0,1]^n$. This also implies that $\iota(k)$ asymptotically converges to $0$. Therefore, we have that (\ref{compact-xI}) is asymptotically stable for any initial state ${x^I}(0), {x^V}(0) \in [0,1]^n$.

It remains to show that ${e^V}(k)$ goes to $0$ asymptotically, and we consider its dynamics in (\ref{compact-xe}). From Lemma \ref{lemma-1}, $\kappa(k)$ will asymptotically converge to $0$ if ${\bar{x}^I}(k)$ asymptotically converges to $0$. Since $\inf _{\substack{k \geq 0 \\ i \in [n]}} \{\gamma_i(k) + \theta_i(k)\} \geq c$ from Assumption \ref{assumption2}, ${e^V}(k+1) = \left(I_n-\Gamma(k)-\Theta(k)\right) e^V(k)$ will exponentially converge to $0$. Note that $\Xi(k)$ and $\Upsilon(k)$ both take finite values because of the boundedness of the transition parameters and states. Therefore, (\ref{compact-xe}) is input-to-state stable. It follows that ${e^V}(k)$ will asymptotically converge to $0$ for all initial conditions with input ${x^I}(k)$ vanishing asymptotically. The statement of the theorem then follows immediately.
\end{proof}
 
Theorem \ref{main result-health} shows the role of the reproduction number $R_o^V$, or more specifically, the lower bound of $\psi_i(o_i(k))$ defined in (\ref{chi}), in epidemic eradication.  In practice, when $R_o^V$ is large and the epidemic cannot disappear spontaneously, administrations of the communities can lead the population so that the lower bound of $\psi(o(k))$ becomes larger, which will make $R_o^V$ smaller than $1$. Numerical examples will be provided in Section \ref{simulation} to illustrate the effectiveness of such control strategies. Further, it will be interesting to analyze optimal control strategies theoretically to realize effective epidemic suppression under some budget constraints in the future.

\subsection{Stability Analysis of Endemic Equilibrium}
Since we analyzed the stability of the disease-free equilibrium, the next step is to study the endemic equilibrium with ${x^I}^*_i>0$ for all $i\in[V]$, which reveals the impact of opinions under situations with more severe epidemics. 

We have seen in Lemma \ref{lemma-1} that all communities reach consensus when the epidemic disappears. Now we consider the opinion states of the endemic equilibrium. Denote by $\mathcal{Z}$ the set of all endemic equilibria for the coupled SIV-opinion system (\ref{coupled model}). We have the following proposition.

\begin{proposition} \label{proposition1}
	For the coupled SIV-opinion system in (\ref{coupled model}), an consensus-endemic state (with ${x^I}^* \neq 0 $, $o^* = a \boldsymbol{1}_n$) is an equilibrium only if $a \in (0,1]$ and 
	\begin{equation} \label{curve}
		\delta_i = \frac{1 - a - \frac{\theta_i(a)(1-a)}{\gamma_i(a)+\theta_i(a)}}{\frac{a}{\gamma_i(a)+\theta_i(a)} + \frac{1}{\sum_{j \in \mathcal{N}_i^D} \beta_{ij}}}, \ \forall i \in[n].
	\end{equation}
\end{proposition}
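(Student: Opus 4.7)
The plan is to impose the equilibrium conditions directly on the three equations of (\ref{coupled model}) under the consensus hypothesis $o_i^* = a$ for all $i$, and then extract the claimed relations by elementary algebra. The driving observation is that the opinion-consensus assumption forces infection homogeneity across communities: at consensus the disagreement terms $o_j^* - o_i^*$ vanish, so setting $o_i(k+1) = o_i(k) = a$ in the third line of (\ref{coupled model}) collapses it to $a = \phi_i {x_i^I}^* + (1-\phi_i)a$, and since $\phi_i \in (0,1)$ this yields ${x_i^I}^* = a$ for every $i \in [n]$. The endemic hypothesis ${x^I}^* \neq 0$ then rules out $a = 0$, while well-posedness (Proposition \ref{well posedness}) gives $a \leq 1$, so $a \in (0,1]$ as claimed.

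Next, I would solve the vigilant-state equilibrium equation, which under $o_i^* = a$ and ${x_i^I}^* = a$ reads $0 = \delta_i a + \theta_i(a)(1-a) - (\gamma_i(a)+\theta_i(a)) {x_i^V}^*$. Assumption \ref{assumption2} guarantees $\gamma_i(a)+\theta_i(a) \geq c > 0$, so this can be inverted to give a closed-form expression for ${x_i^V}^*$, in the spirit of (\ref{xVstar}). I would then substitute ${x_i^I}^* = a$ and this expression into the equilibrium version of the infected equation, namely $\delta_i a = a\bigl(1 - a - {x_i^V}^*\bigr)\sum_{j \in \mathcal{N}_i^D} \beta_{ij}$, cancel the common factor $a > 0$, and regroup the two $\delta_i$-carrying terms on the left. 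Solving for $\delta_i$ and factoring out a common denominator produces exactly (\ref{curve}).

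I do not anticipate a real obstacle; the argument is essentially bookkeeping driven by the forced identity ${x_i^I}^* = a$. The only points worth checking are that the divisions by $\phi_i$, by $a$, and by $\gamma_i(a)+\theta_i(a)$ are all legal, which is guaranteed respectively by $\phi_i \in (0,1)$, by the endemic hypothesis, and by Assumption \ref{assumption2}. What is conceptually notable, and worth remarking on after the proof, is the intermediate identity ${x_i^I}^* = a$: at any consensus-endemic equilibrium the common opinion level coincides with the infection prevalence in every community, which pins down $a$ uniquely once $\delta_i$, $\theta_i(\cdot)$, $\gamma_i(\cdot)$, and the incoming edge weights $\beta_{ij}$ are specified.
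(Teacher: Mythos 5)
Your argument is correct and takes essentially the same route as the paper's proof: set the opinion equation of (\ref{coupled model}) at consensus to force ${x_i^I}^* = a$ for all $i$ (hence $a \in (0,1]$ by the endemic hypothesis and well-posedness), solve the vigilant equation for ${x_i^V}^*$ as in (\ref{xVstar}), and substitute into the infected equilibrium equation, cancelling the factor $a>0$ and isolating $\delta_i$ to obtain (\ref{curve}). The only caveat is your closing side remark: (\ref{curve}) is a necessary relation, but without further structure on $\theta_i(\cdot)$ and $\gamma_i(\cdot)$ it need not pin down $a$ uniquely; this does not affect the proof itself.
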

\begin{proof}
	Substituting $o^* = a \boldsymbol{1}_n$ into the third equation of (\ref{coupled model}), we have
	$$
	o_i^*=\phi_i {x_i^I}^* +\left(1-\phi_i\right) o_i^*, \ \forall i \in[n].
	$$
	Thus we have ${x_i^I}^* = a, \forall i \in[n]$. Since ${x^I}^* \neq 0 $, one obtains $a \in (0,1]$. Then we derive from the first two equations of (\ref{coupled model}) and (\ref{xVstar}) that
	$$
		\frac{\delta_i}{\sum_{j \in \mathcal{N}_i^D} \beta_{ij}} + \frac{\delta_i a + \theta_i(a) (1-a)}{\gamma_i(a) + \theta_i(a)} + a = 1, \ \forall i \in[n].
	$$
	We then obtain (\ref{curve}).
\end{proof}

The following corollary is a direct result from the system model of (\ref{coupled model}) and Proposition \ref{proposition1}.
\begin{corollary} \label{corollary-endemic}
	If $z^*$ is an endemic equilibrium of the coupled SIV-opinion model in (\ref{coupled model}), then ${x^I}^* \!> 0$, ${x^V}^* \!> 0$, and $o^* > 0$.
\end{corollary}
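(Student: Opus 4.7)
The plan is to prove the three strict positivity claims by contradiction, exploiting strong connectivity of $\mathcal{G}_D$ and $\mathcal{G}_O$ (Assumption \ref{connection}) and the nonnegativity from Proposition \ref{well posedness}. I would handle them in the order $o^* > 0$, then ${x^I}^* > 0$, then ${x^V}^* > 0$, so that later steps can invoke the earlier conclusions.

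For $o^* > 0$, define $S = \{i : o_i^* = 0\}$ and plug $o_i^* = 0$ into the opinion equilibrium of (\ref{coupled model}). Because $W$ is row-stochastic and every state lies in $[0,1]$, both terms on the right-hand side are nonnegative, so $\phi_i {x_i^I}^* = 0$ and $\sum_{j \in \mathcal{N}_i^O} w_{ij} o_j^* = 0$ individually. Since $\phi_i \in (0,1)$, the first forces ${x_i^I}^* = 0$; the second forces $o_j^* = 0$ for every in-neighbor $j$ of $i$ in $\mathcal{G}_O$. Hence $S$ is closed under taking in-neighbors in $\mathcal{G}_O$, and strong connectivity forces $S \in \{\emptyset,[n]\}$. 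The case $S = [n]$ gives $o^* \equiv 0$ and, by the same identity, ${x^I}^* \equiv 0$, contradicting endemicity.

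For ${x^I}^* > 0$, set $Z = \{i : {x_i^I}^* = 0\}$ and rewrite the $x^I$-equilibrium as $(1 - {x_i^V}^*)\sum_{j \in \mathcal{N}_i^D} \beta_{ij} {x_j^I}^* = 0$. For $i \in Z$ this splits into Case B (all $\mathcal{G}_D$-in-neighbors of $i$ lie in $Z$) or Case A (${x_i^V}^* = 1$, which via the $x^V$-equilibrium and Assumption \ref{assumption2} forces $\gamma_i(o_i^*) = 0$ and $\theta_i(o_i^*) \geq c$). If every node of $Z$ is in Case B, then $Z$ is closed under in-neighbors in $\mathcal{G}_D$, so strong connectivity gives $Z \in \{\emptyset,[n]\}$, and $Z = [n]$ is excluded by endemicity. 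The step I expect to be the main obstacle is ruling out Case A, because a ``fully vigilant'' node is internally self-consistent: no susceptible mass is exposed, so a positive infection pressure from neighbors can coexist with ${x_i^I}^* = 0$. I would address it by combining $o^* > 0$ with the opinion equilibrium at such an $i$: substituting ${x_i^I}^* = 0$ reduces it to $\phi_i o_i^* = (1-\phi_i)(1-o_i^*)\sum_j w_{ij}(o_j^* - o_i^*)$, which forces $o_i^* < 1$ and the existence of an opinion in-neighbor with strictly larger $o_j^*$. Under the natural physical nondegeneracy $\gamma_i(o) > 0$ for $o > 0$, Case A is immediately impossible; otherwise I would iterate this strict-increase phenomenon along a directed path in $\mathcal{G}_O$ (or apply it at an $o^*$-maximizer over $Z$) and combine with the $\mathcal{G}_D$-structure to reach a contradiction.

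Finally, ${x^V}^* > 0$ follows directly from the closed form (\ref{xVstar}): the denominator exceeds $c > 0$ by Assumption \ref{assumption2}, and, with ${x_i^I}^* > 0$ in hand from the previous step, the numerator $\delta_i {x_i^I}^* + \theta_i(o_i^*)(1-{x_i^I}^*)$ is strictly positive provided $\delta_i > 0$, or otherwise via $\theta_i(o_i^*) > 0$ (which follows from $\gamma_i + \theta_i \geq c$) together with ${x_i^I}^* < 1$.
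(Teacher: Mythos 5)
Your skeleton is sound exactly up to the point you yourself flag, and that flagged point is a genuine, unclosable gap as you have left it. The $o^*>0$ step (closure of $\{i: o_i^*=0\}$ under $\mathcal{G}_O$-in-neighbors, then Assumption \ref{connection}) is correct, and your dichotomy at a node with ${x_i^I}^*=0$ — all $\mathcal{G}_D$-in-neighbors infection-free (Case B) or ${x_i^V}^*=1$ (Case A) — is the crux. But neither of your proposed resolutions of Case A works under the paper's stated hypotheses. The nondegeneracy $\gamma_i(o)>0$ for $o>0$ is nowhere assumed: Assumption \ref{assumption2} only requires $\gamma_i,\theta_i\in[0,1]$ and $\gamma_i+\theta_i\geq c$, so $\gamma_i\equiv 0$, $\theta_i\equiv c$ is admissible. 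And the fallback (iterating the ``strictly larger opinion in-neighbor'' conclusion along a path, or applying it at an $o^*$-maximizer over $Z$) cannot produce a contradiction: that conclusion is only available at nodes of $Z$, the better-informed neighbor of the maximizer may simply lie outside $Z$ where nothing forces a further increase, and there is no structural relation between $\mathcal{N}_i^O$ and $\mathcal{N}_i^D$ to exploit. In fact Case A equilibria exist under Assumptions \ref{connection}--\ref{assumption1}: take $n=2$ with both graphs the two-cycle, $\gamma_1\equiv 0$, $\theta_1\equiv 0.1$, $\phi_1=\phi_2=0.5$, $w_{12}=1$, $w_{21}=0.1$, $\beta_{12}=0.5$, $\beta_{21}=0.1$, $\beta_{22}=0.8$, $\theta_2\equiv 0.1$, $\gamma_2\equiv 0.4$, $\delta_2\approx 0.171$; then $({x_1^I}^*,{x_1^V}^*)=(0,1)$, ${x_2^I}^*\approx 0.514$, ${x_2^V}^*\approx 0.273$, $o_2^*=0.5$, and $o_1^*\approx 0.219$ solving $\phi_1 o_1^*=(1-\phi_1)(1-o_1^*)(o_2^*-o_1^*)$ satisfy every equilibrium equation of (\ref{coupled model}), yet ${x_1^I}^*=0$. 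So no argument from the stated assumptions alone can exclude Case A; the statement tacitly relies on $\gamma_i(o_i^*)>0$, which by (\ref{xVstar}) gives ${x_i^S}^*=\gamma_i(o_i^*)/(\gamma_i(o_i^*)+\theta_i(o_i^*))>0$ at any infection-free node and reduces everything to your Case B propagation.

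For comparison, the paper offers no proof at all (the corollary is asserted as a ``direct result''), and the intended reasoning is evidently your branch (a): positivity of $\gamma_i$ at the equilibrium opinion kills Case A, strong connectivity of $\mathcal{G}_D$ then forces ${x^I}^*>0$, the opinion equation gives $o^*>0$ pointwise, and (\ref{xVstar}) gives ${x^V}^*>0$. So under that extra nondegeneracy your argument is complete and even more detailed than the paper's; without it, your ``otherwise'' clause is a plan that cannot succeed. One smaller slip in your last step: $\theta_i(o_i^*)>0$ does \emph{not} follow from $\theta_i+\gamma_i\geq c$ (one may have $\theta_i=0$, $\gamma_i=c$), so positivity of ${x_i^V}^*$ likewise needs $\delta_i>0$ or $\theta_i(o_i^*)>0$; with $\delta_i=0$ and $\theta_i(o_i^*)=0$ the corner equilibrium ${x_i^I}^*=1$, ${x_i^V}^*=0$ is admissible, so this part of the corollary is also a generic-parameter claim rather than a consequence of Assumption \ref{assumption2} alone.
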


Proposition \ref{proposition1} and Corollary \ref{corollary-endemic} state that as long as the epidemic persists, no community can be completely disease-free or agree that the epidemic does not pose a threat. Furthermore, under such a situation, it is very unlikely that the communities reach a consensus on the severity of the epidemic since it will require the transition parameters to satisfy the specific relation stated in (\ref{curve}).

We would like to study the stability of the endemic equilibrium. First, let us define 
$$
\begin{aligned}
	\eta & \triangleq \max _{\substack{o_i \in [\underline{o}_i, \overline{o}_i], i \in [n]}} \{1-\gamma_i(k) - \theta_i(k)\}, \\
	\nu & \triangleq \max _{\substack{o_i \in [\underline{o}_i, \overline{o}_i], i \in [n]}} \{\theta_i(k)-\delta_i\},
\end{aligned}
$$
and for a given endemic equilibrium $z^*$, let
\begin{equation} \label{phi}
\varphi = \max _{\substack{x^S \in [0, 1]^n}} \left\|	I_n - \Delta - \mathcal{B}^* + \operatorname{diag}\left(x^S\right)B\right\|_{\infty},
\end{equation}
where $\mathcal{B}^* = \operatorname{diag}(B{x^I}^*)$. Then, the following result characterizes the condition under which the endemic equilibrium is globally asymptotically stable.

\begin{theorem} \label{theorem-endemic}
	Suppose that $R_o^V > 1$ and $z^* = ({x^I}^*, {x^V}^*,o^*)$ is an endemic equilibrium of the coupled SIV-opinion model (\ref{coupled model}). Then $z^*$ is  asymptotically stable for all disease-nonzero initial conditions, i.e., $x^I(0) \neq 0$, if 
	\begin{equation} \label{range}
		  \sum_{j \in \mathcal{N}_i^D}\! \beta_{ij} < \delta_i + \mathcal{B}^*_{ii} < 2-\!\sum_{j \in \mathcal{N}_i^D}\! \beta_{ij}+2\beta_{ii}, 
	\end{equation}
	\vspace{-0.3cm}
	\begin{equation} \label{range-1}
		-2w_{ii}-\frac{\phi_i}{1-\phi_i} < \mathcal{O}^*_{ii} < \frac{\phi_i}{1-\phi_i}, 
	\end{equation}
	for all $i \in[n]$, where 
	$$\mathcal{O}^* = \operatorname{diag}\left(L o^*\right),$$
	and there exist $\varsigma_1, \varsigma_2 >0$ such that
	\begin{equation} \label{theorem2}
		\begin{aligned}
			\frac{\varsigma_2 \nu^2}{1-\eta^2} + \nu^2 + \frac{\varsigma_1^2 \varphi^2}{(1-\varphi^2)^2} & < \varsigma_1, \\
			\frac{\varsigma_1 \rho^2(\mathcal{B}^*)}{1-\varphi^2} + \rho^2(\mathcal{B}^*) + \frac{\varsigma_2^2 \eta^2}{(1-\eta^2)^2} & < \varsigma_2.
		\end{aligned}
	\end{equation}
	
\end{theorem}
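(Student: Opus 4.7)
The plan is to prove asymptotic stability by constructing a weighted Lyapunov-type function on error coordinates centered at the endemic equilibrium $z^*$. First I would define error variables $\tilde{x}^I(k) = x^I(k) - {x^I}^*$, $\tilde{o}(k) = o(k) - o^*$, and, paralleling the change of variables (\ref{coordinate change}) used in the proof of Theorem \ref{main result-health} but centered at the endemic $o^*$, a vigilant error $\tilde{e}^V(k)$ that vanishes at the equilibrium. Substituting into (\ref{coupled model}) and expanding around $z^*$ yields three coupled recursions schematically of the form $\tilde{x}^I(k+1) = A_I(k)\tilde{x}^I(k) + M_{IV}(k)\tilde{e}^V(k) + h_I(k)$, with analogues for $\tilde{e}^V(k)$ and $\tilde{o}(k)$, where the $h$-terms collect higher-order residuals in the errors.

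Next I would extract per-subsystem contraction rates. Since $x^S(k) \in [0,1]^n$ by Proposition \ref{well posedness}, the linear part of the infection-error recursion is uniformly bounded by $\varphi$ as defined in (\ref{phi}); condition (\ref{range}) is precisely what forces $\varphi < 1$, because the left inequality guarantees that every entry of $A_I(k) = I_n - \Delta - \mathcal{B}^* + \operatorname{diag}(x^S(k))B$ is nonnegative and the right inequality keeps every row-sum strictly below $1$. Assumption \ref{assumption2} yields a contraction rate $\eta < 1$ for the vigilant error. For the opinion error, linearizing the polar update (\ref{opinion-coupled}) about $o^*$ gives a Jacobian whose diagonal entries are determined by $\phi_i$, $w_{ii}$, and $\mathcal{O}^*_{ii}$, and (\ref{range-1}) is exactly what places these entries in the open interval $(-1,1)$, making the decoupled opinion subsystem contractive; the parameters $\nu$ and $\rho(\mathcal{B}^*)$ then furnish uniform bounds on the cross-coupling gains.

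Finally I would pick a Lyapunov function of the form $V(k) = \varsigma_1 \|\tilde{x}^I(k)\|^2 + \varsigma_2 \|\tilde{e}^V(k)\|^2 + \|\tilde{o}(k)\|^2$, compute $V(k+1) - V(k)$, and split every cross term via Young's inequality. The factors $1/(1-\varphi^2)$ and $1/(1-\eta^2)$ appearing in (\ref{theorem2}) arise from summing the geometric tails introduced when a contracting subsystem is driven by the error of another, while the $\varsigma_1^2\varphi^2$ and $\varsigma_2^2\eta^2$ terms come from the self-coupling between each subsystem's current and past errors. With the splitting weights matched, the two inequalities of (\ref{theorem2}) emerge as the negative-definiteness conditions on the resulting quadratic form in $(\|\tilde{x}^I\|, \|\tilde{e}^V\|, \|\tilde{o}\|)$, yielding $V(k+1) - V(k) \leq -\epsilon\,\|(\tilde{x}^I,\tilde{e}^V,\tilde{o})(k)\|^2$ and hence asymptotic stability.

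The main obstacle is the genuine nonlinearity of the opinion update through $\widetilde{W}(o(k)) = (I_n - O(k))W + O(k)$: the linearization leaves bilinear residuals in $\tilde{o}$ and mixed $\tilde{o}\,\tilde{x}^I$ terms that must be absorbed into the Young's-inequality slack without disturbing the structure of (\ref{theorem2}). A secondary subtlety is showing that the hypothesis $x^I(0) \neq 0$ keeps the infection trajectory away from the healthy equilibrium rather than drifting toward it; this can be argued from the positivity of $\bar{\mathcal{R}}$ (whose spectral radius equals $R_o^V > 1$) together with the strong connectivity of $\mathcal{G}_D$ from Assumption \ref{connection}, ruling out convergence to $x^I \equiv 0$. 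Lastly, one should verify that positive $\varsigma_1, \varsigma_2$ satisfying (\ref{theorem2}) actually exist under the stated conditions, which reduces to an elementary algebraic compatibility check on the coefficients produced by (\ref{range})--(\ref{range-1}).
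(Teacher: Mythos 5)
Your toolkit is the right one (error coordinates at $z^*$, contraction constants $\varphi,\eta$, coupling gains $\nu,\rho(\mathcal{B}^*)$, a weighted quadratic Lyapunov function plus Young's inequality), but the plan as written has a genuine gap: it would not produce the theorem with exactly the hypotheses (\ref{range}), (\ref{range-1}), (\ref{theorem2}). The paper never puts the opinion error into the Lyapunov function. It works with the exact error dynamics of $(\hat{e}^I,\hat{e}^V)=(x^I-{x^I}^*,\,x^V-{x^V}^*)$ only --- these are linear time-varying once the state-dependent factor is kept inside $\mathcal{F}_{11}(k)=I_n-\Delta-\mathcal{B}^*+\operatorname{diag}(x^S(k))B$, so there are no higher-order residuals $h_I$ to manage --- and uses $V(k)=\xi(k)^{\top}\mathcal{P}(k)\xi(k)$ with $\mathcal{P}(k)$ block-diagonal, where $\mathcal{P}_1(k),\mathcal{P}_2(k)$ are the time-varying solutions of discrete Lyapunov equations with right-hand sides $-\varsigma_1 I_n$ and $-\varsigma_2 I_n$. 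The specific constants in (\ref{theorem2}) come from the geometric-series bounds $\|\mathcal{P}_1(k)\|_\infty\le\varsigma_1/(1-\varphi^2)$, $\|\mathcal{P}_2(k)\|_\infty\le\varsigma_2/(1-\eta^2)$ and from bounding the cross block with $2x^{\top}y\le x^{\top}x+y^{\top}y$, which is what generates the $\varsigma_1^2\varphi^2/(1-\varphi^2)^2$ and $\varsigma_2^2\eta^2/(1-\eta^2)^2$ terms. With your constant-weight function $\varsigma_1\|\tilde{x}^I\|^2+\varsigma_2\|\tilde{e}^V\|^2+\|\tilde{o}\|^2$, the decrease condition you would obtain is a different family of inequalities, and, worse, including $\tilde{o}$ forces you to absorb the $\Phi\tilde{x}^I$ injection and the bilinear residuals of the polar update into the quadratic form, producing extra conditions not implied by (\ref{theorem2}). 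That unresolved absorption is exactly the "main obstacle" you flag, and it is not a technicality to be smoothed over --- it is the point where your route and the stated hypotheses part ways.

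The paper's way around it is a cascade: first conclude $(\hat{e}^I,\hat{e}^V)\to 0$ from the quadratic argument above (the opinion enters that block only through the uniformly bounded coefficients $\Theta(k),\Gamma(k)$, covered by $\nu$ and $\eta$), and only then treat the opinion error, which satisfies $\hat{e}^o(k+1)=(I_n-\Phi)\bigl(\widetilde{W}(o(k))+\mathcal{O}^*\bigr)\hat{e}^o(k)$ driven by the vanishing input $\Phi\hat{e}^I(k)$; condition (\ref{range-1}) gives $\|(I_n-\Phi)(\widetilde{W}(o(k))+\mathcal{O}^*)\|_\infty<1$, and input-to-state stability finishes the proof with no quadratic form in $\tilde{o}$ at all. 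Two smaller corrections: your reading of (\ref{range}) is reversed --- neither inequality guarantees entrywise nonnegativity of $\mathcal{F}_{11}(k)$; rather, the left inequality bounds the absolute row sum when the diagonal entry $1-\delta_i-\mathcal{B}^*_{ii}+x^S_i\beta_{ii}$ is nonnegative and the right inequality handles the case when it is negative, the two together giving $\varphi<1$. Also, existence of $\varsigma_1,\varsigma_2$ is a hypothesis of the theorem, so no compatibility check is required, and the persistence argument you sketch from $\rho(\bar{\mathcal{R}})=R_o^V>1$ and strong connectivity is not used (or needed) in the paper's proof.
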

\begin{proof}
	From the system model (\ref{coupled model}), we obtain 
	$$
	\begin{gathered}
	(1 - {x_i^V}^* - {x_i^I}^*)\sum_{j \in \mathcal{N}_i^D} \beta_{ij} {x^I_j}^* - \delta_i  {x_i^I}^* = 0, \\
	(\gamma_i(k) + \theta_i(k) ){x_i^V}^* = \delta_i {x_i^I}^* + \theta_i(k) (1-{x_i^I}^*).
	\end{gathered}
	$$
	New system variables are denoted by $\hat{e}_i^I(k) = x^I_i(k) - {x_i^I}^*$ and $\hat{e}_i^V(k) = x^V_i(k) - {x_i^V}^*$. Then, the iteration for $\hat{e}_i^I(k)$ and $\hat{e}_i^V(k)$ can be obtained as
	\begin{align}\label{endemic-y}
		&{\hat{e}_i^I}(k+1) = x^I_i(k+1) - {x_i^I}^* \nonumber\\
		& \ = x^I_i(k) + (1 - x^I_i(k) - x^V_i(k))\sum_{j \in \mathcal{N}_i^D} x^I_j(k) \nonumber\\
		& \ \quad - \delta_i x^I_i(k) - {x_i^I}^* \nonumber\\
		& \ = \hat{e}_i^I(k)  + (1 - \hat{e}_i^I(k) - {x_i^I}^* - {\hat{e}^V_i}(k) - {x_i^V}^*) \nonumber\\
		& \ \quad \times \sum_{j \in \mathcal{N}_i^D} \beta_{ij} (\hat{e}^I_j(k) + {x_j^I}^*) - \delta_i (\hat{e}_i^I(k) + {x_i^I}^*) \nonumber\\
		& \ = (1 - \delta_i - \sum_{j \in \mathcal{N}_i^D} \beta_{ij} {x_j^I}^*) {\hat{e}^I_i}(k)- \sum_{j \in \mathcal{N}_i^D} \beta_{ij} {x_j^I}^* \hat{e}_i^V(k) \nonumber\\
		&\ \quad + (1 - x^I_i(k) - x^V_i(k)) \sum_{j \in \mathcal{N}_i^D} \beta_{ij} \hat{e}^I_j(k),
	\end{align}
	\begin{align} \label{endemic-xe}
		&{\hat{e}_i^V}(k+1) = {\hat{e}^V_i}(k) + \delta_i x^I_i(k) - \gamma_i(k) x^V_i(k) \nonumber\\
		& \quad\quad\quad\quad\quad + \theta_i(k) (1- x^I_i(k) - x^V_i(k)) \nonumber\\
		&\ = {\hat{e}^V_i}(k) + \delta_i(\hat{e}_i^I(k) + {x_i^I}^*) - \gamma_i(k) ({\hat{e}^V_i}(k) + {x_i^V}^*) \nonumber\\
		& \quad + \theta_i(k) (1- \hat{e}_i^I(k) - {x_i^I}^* - \hat{e}_i^V(k) - {x_i^V}^*) \nonumber\\
		&\ = (1 - \gamma_i(k) - \theta_i(k)) {\hat{e}^V_i}(k) + (\delta_i - \theta_i(k)) \hat{e}_i^I(k).
	\end{align}
	Note that (\ref{endemic-y}) and (\ref{endemic-xe}) can be rewritten in the following compact form:
	\begin{equation}
		\left[\begin{array}{c}
			\hat{e}^I(k+1) \\
			\hat{e}^V(k+1)
		\end{array}\right]=\left[\begin{array}{cc}
			 \mathcal{F}_{11}(k) & \mathcal{F}_{12} \\
			 \mathcal{F}_{21}(k) & \mathcal{F}_{22}(k)
		\end{array}\right] \left[\begin{array}{c}
		\hat{e}^I(k) \\
		\hat{e}^V(k)
		\end{array}\right],
	\end{equation}
	where 
	$$
	\begin{aligned}
	\mathcal{F}_{11}(k) &= I_n - \Delta + \operatorname{diag}\left(x^S(k)\right)B - \mathcal{B}^*, \ \mathcal{F}_{12} =  -\mathcal{B}^*, \\ 
	\mathcal{F}_{21}(k) &= \Theta(k) - \Delta,\ \mathcal{F}_{22}(k) = I_n-\Gamma(k)-\Theta(k).
	\end{aligned}
	$$ 
	Denoting $\xi(k) = \left[\hat{e}^I(k)^\top \quad \hat{e}^V(k)^\top\right]^\top$, one obtains
	\begin{equation} \label{combine-endemic}
		\xi(k+1) = F(k) \xi(k),
	\end{equation}
	where $F(k) = \left[\begin{array}{cc}
		\mathcal{F}_{11}(k) & \mathcal{F}_{12} \\
		\mathcal{F}_{21}(k) & \mathcal{F}_{22}(k)
	\end{array}\right]$. 
	
	First, we consider the stability of the system
	\begin{equation} \label{test}
	x(k+1) = \mathcal{F}_{11}(k) x(k).
	\end{equation}
	By (\ref{range}) and the fact that $x^S(k) \in [0, 1]^n$, we can find that $\forall k \geq 0, \|\mathcal{F}_{11}(k)\|_{\infty} < 1$. Moreover, by (\ref{phi}), we have $0<\varphi<1$. Thus similar to the proof of Lemma \ref{lemma-1}, system (\ref{test}) is exponentially stable. Therefore, given any $\varsigma_1 > 0$, there exists a unique $\mathcal{P}_1(k) > 0$ satisfying \cite{rugh1996linear}
	\begin{equation} \label{Lyapunov-1}
		\mathcal{F}_{11}^{\top}(k) \mathcal{P}_1(k+1) \mathcal{F}_{11}(k)-\mathcal{P}_1(k) = -\varsigma_1 I_n,
	\end{equation}
	and $\mathcal{P}_1(k)$ is given by
	\begin{equation}
		\mathcal{P}_1(k) = \varsigma_1\sum_{j=0}^{\infty} (\mathcal{F}_{11}^{\top}(k))^j \left(\mathcal{F}_{11}(k)\right)^j.
	\end{equation}
	We then have 
		\begin{align} \label{p1k}
		\!\left\| \mathcal{P}_1(k) \right\|_{\infty} &\leq \varsigma_1 \sum_{j=0}^{\infty} \left\| (\mathcal{F}_{11}^{\top}(k))^j \left(\mathcal{F}_{11}(k)\right)^j\right\|_{\infty} \nonumber\\
		&\leq \varsigma_1 \sum_{j=0}^{\infty} \left\| \mathcal{F}_{11}^{\top}(k)\right\|_{\infty}^j \left\|\mathcal{F}_{11}(k)\right\|_{\infty}^j \leq \frac{\varsigma_1}{1-\varphi^2}.
		\end{align}
	Based on Assumption \ref{assumption2}, $x(k+1) = \mathcal{F}_{22}(k) x(k)$ is also exponentially stable. Similarly, given any $\varsigma_2 > 0$, there exists a unique $P_2(k) > 0$ satisfying
	\begin{equation} \label{Lyapunov-2}
		\mathcal{F}_{22}^{\top}(k) \mathcal{P}_2(k+1) \mathcal{F}_{22}(k)-\mathcal{P}_2(k) = -\varsigma_2 I_n,
	\end{equation}
	and $\mathcal{P}_2(k)$ is given by
	\begin{equation}
		\mathcal{P}_2(k) = \varsigma_2\sum_{j=0}^{\infty} \left(\mathcal{F}_{22}(k)\right)^{2j}
	\end{equation}
	with 
	\begin{equation} \label{p2k}
		\left\| \mathcal{P}_2(k) \right\|_{\infty} \leq \frac{\varsigma_2}{1-\eta^2}.
	\end{equation}
	Consider the Lyapunov function candidate $V(k)\!=\! \xi(k)^\top P(k) \xi(k)$, where $\mathcal{P}(k)=\left[\begin{array}{cc}\mathcal{P}_1(k) & 0 \\ 0 & \mathcal{P}_2(k)\end{array}\right]$. Then, by (\ref{combine-endemic}), (\ref{Lyapunov-1}), and (\ref{Lyapunov-2}), we have 
		\begin{align} \label{Lyapunov-3}
			V(k+1) - V(k) =& \ \hat{e}^I(k)^\top \mathcal{J}_1(k) \hat{e}^I(k) + \hat{e}^V(k)^\top \mathcal{J}_2(k) \hat{e}^V(k) \nonumber\\
			& + 2\hat{e}^V(k)^\top \mathcal{J}_3(k) \hat{e}^I(k),
		\end{align}
	where 
	$$ 
		\begin{aligned}
			\mathcal{J}_1(k) &= -\varsigma_1 I_n + \mathcal{F}_{21}(k) \mathcal{P}_2(k+1) \mathcal{F}_{21}(k), \\
			\mathcal{J}_2(k) &= -\varsigma_2 I_n + \mathcal{F}_{12} \mathcal{P}_1(k+1) \mathcal{F}_{12}, \\
			\mathcal{J}_3(k) &= \mathcal{F}_{12} \mathcal{P}_1(k+1) \mathcal{F}_{11}(k) + \mathcal{F}_{22}(k) \mathcal{P}_2(k+1) \mathcal{F}_{21}(k).
		\end{aligned}
	$$
	Using (\ref{p1k}), (\ref{p2k}) and the Rayleigh--Ritz theorem in \cite[Theorem 10.13]{laub2004matrix}, one obtains that
		\begin{align} \label{Lyapunov-4}
		\hat{e}^I(k)^\top \mathcal{J}_1(k) \hat{e}^I(k) &\leq \left(-\varsigma_1 + \frac{\varsigma_2 \nu^2}{1-\eta^2} \right) \left\|\hat{e}^I(k)\right\|^2, \\
		\hat{e}^V(k)^\top \mathcal{J}_2(k) \hat{e}^V(k) &\leq \left(-\varsigma_2 + \frac{\varsigma_1 \rho^2(\mathcal{B}^*)}{1-\varphi^2} \right) \left\|\hat{e}^V(k)\right\|^2.
		\end{align}
	Moreover, since $\forall x, y \in \mathbb{R}^n$, $2 x^T y \leq x^T x+y^T y$, the last term of (\ref{Lyapunov-3}) can be bounded as
	\begin{equation} \label{Lyapunov-5}
		\begin{aligned}
			&2\hat{e}^V(k)^\top \mathcal{J}_3(k) \hat{e}^I(k) \\ 
			&\!\!\leq  \hat{e}^I(k)^\top\! \left(\mathcal{P}_1(k\!+\!1) \mathcal{F}_{11}(k) \mathcal{F}_{11}^{\top}(k) \mathcal{P}_1(k\!+\!1) + \mathcal{F}_{21}^2(k)\right)\hat{e}^I(k)  \\
			&+ \hat{e}^V(k)^\top\left(F_{12}^2 + F_{22}(k) \mathcal{P}_2^2(k+1) F_{22}(k)\right)\hat{e}^V(k) \\
			&\!\!\leq \! \Big(\frac{\varsigma_2^2 \eta^2}{(1-\eta^2)^2} \!+\! \rho^2(\mathcal{B}^*)\Big) \! \left\|\hat{e}^V\!(k)\right\|^2  \!+\! \Big( \frac{\varsigma_1^2 \varphi^2}{\left(1-\varphi^2\right)^2} \!+\! \nu^2 \Big)\! \left\|\hat{e}^I\!(k)\right\|^2\!.
		\end{aligned}
	\end{equation}
	Finally, under (\ref{theorem2}), substituting (\ref{Lyapunov-4})--(\ref{Lyapunov-5}) into (\ref{Lyapunov-3}), we have
		\begin{align}
			&V(k+1) - V(k) \nonumber\\
			&\leq  \left(\!-\varsigma_2 + \rho^2(\mathcal{B}^*) + \frac{\varsigma_1 \rho^2(\mathcal{B}^*)}{1-\varphi^2} + \frac{\varsigma_2^2 \eta^2}{(1-\eta^2)^2} \right) \left\|\hat{e}^V(k)\right\|^2 \nonumber\\
			&\quad + \left(-\varsigma_1 + \nu^2 + \frac{\varsigma_2 \nu^2}{1-\eta^2} + \frac{\varsigma_1^2 \varphi^2 }{\left(1-\varphi^2\right)^2} \right) \left\|\hat{e}^I(k)\right\|^2 \nonumber\\
			&\leq  \ 0,
		\end{align}
	where $V(k+1) - V(k)\! = 0$ if and only if $\xi(k)\! = 0$. Therefore, (\ref{combine-endemic}) is asymptotically stable for all disease-nonzero initial conditions. 
	
	Further, denote by $\hat{e}_i^o(k) = o_i(k) - o_i^*$. From the system model (\ref{coupled model}), the iteration for $\hat{e}^o(k)$ can be obtained as 
	\begin{equation}
		\hat{e}^o(k+1) = (I_n-\Phi)\left(\widetilde{W}(o(k)) + \mathcal{O}^* \right) \hat{e}^o(k).
	\end{equation}
	By (\ref{range-1}) and the row-stochasticity of $\widetilde{W}(o(k))$,  we can find that $\| (I_n-\Phi)\left(\widetilde{W}(o(k)) + \mathcal{O}^* \right) \|_{\infty} < 1, \forall k \geq 0$. Thus similar to the proof of Lemma \ref{lemma-1}, $\hat{e}^o(k)$ is globally exponentially stable. Hence, $z^*$ is asymptotically stable for all disease-nonzero initial conditions.
\end{proof}

The above theorem demonstrates that when $R_o^V > 1$, the state of system (\ref{coupled model}) converges to an endemic equilibrium under certain conditions. We should point out that the coupled SIV-opinion dynamics is complicated especially when $R_o^V\! >\! 1$, and our results may not give the full picture. Note that the condition in Theorem \ref{main result-health} is only sufficient to guarantee the convergence of disease-free equilibrium. This thus means that even when $R_o^V > 1$,  the coupled system (\ref{coupled model}) may still have healthy equilibria. Similar issues remain unsolved for other related problems studied in, e.g., \cite{she2022networked,lin2021discrete,pare2020analysis}. Hence, analyzing the existence of a larger healthy/endemic boundary or lack thereof (e.g., locally stable healthy and endemic equilibria coexist in one system with their own attractive region) remains a research direction for future work.

\section{Simulations} \label{simulation}
In this section, we demonstrate how the proposed coupled SIV-opinion model (\ref{coupled model}) can be used to simulate an epidemic spreading process, and illustrate the effectiveness of the derived theoretical results on a real-world large-scale network. 

\subsection{Real-world Network}
We consider an epidemic process spreading over a network of $n = 46$ communities, where each community represents a prefecture of Japan (except for Kumamoto, due to the lack of statistics). Both the physical network for disease spreading and the social network for opinion evolution satisfy Assumption \ref{connection}, but the network structures (i.e., the links) are different because of the distinct spreading patterns in the real world. 

\begin{figure} [t!]
	\centering
	\subfloat[\label{fig1}]{
		\includegraphics[scale=0.50]{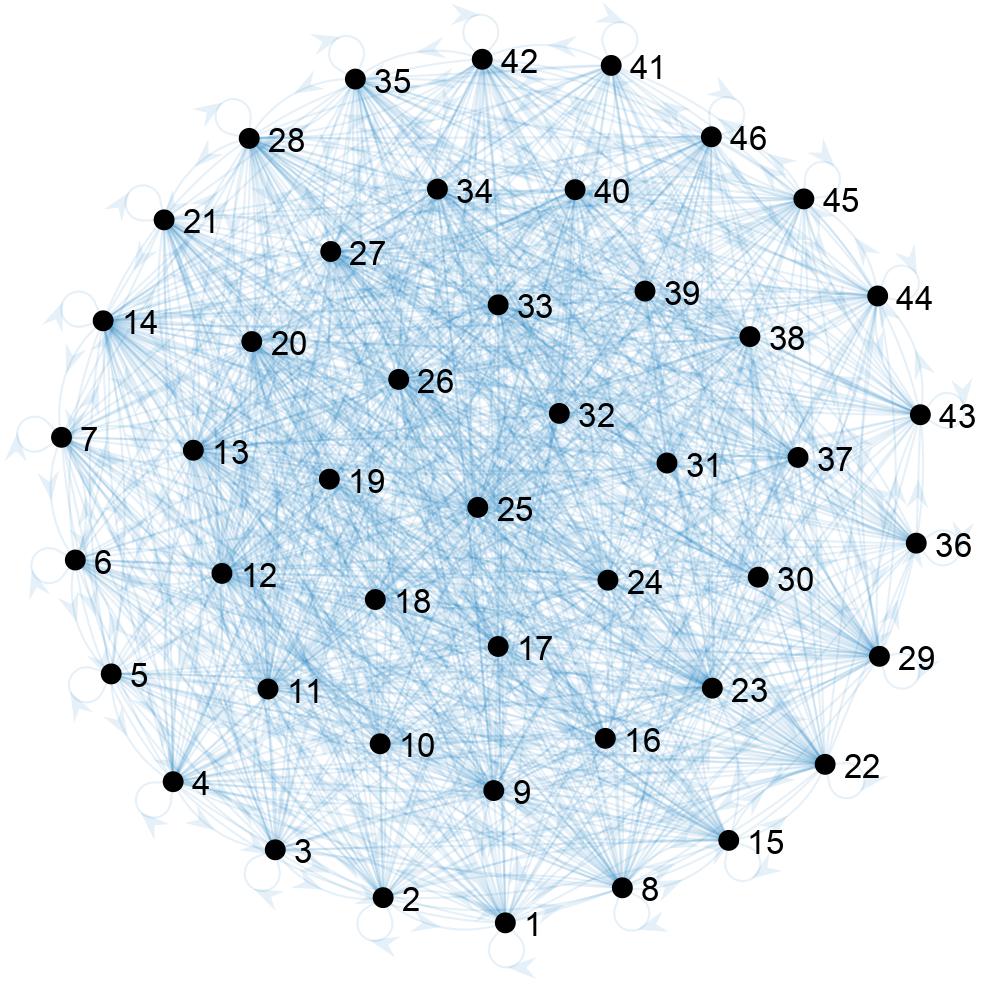}}
	\subfloat[\label{fig2}]{
		\includegraphics[scale=0.58]{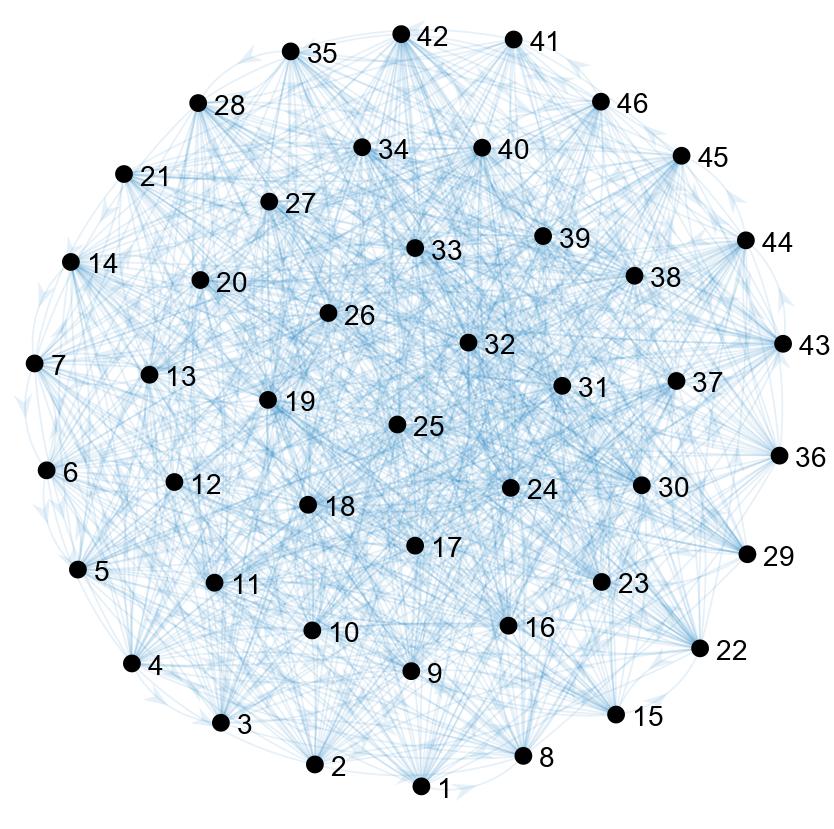} }
	\caption{Network structures. (a) Physical interactions. (b) Opinion interactions.}
\end{figure}


In the physical network depicted in Fig.~\ref{fig1}, the links signify human mobility and migration between prefectures. This network is established using the statistics from the Eighth National Survey on Migration published by the National Institute of Population and Social Security Research in Japan \cite{national2019}.
Note that the original physical network $\bar{\mathcal{G}}_P$ is a complete directed graph, i.e., the original adjacency matrix $\bar{B}$ is asymmetric and strictly positive because population migration exists between any pair of prefectures. However, the large and small entries of $\bar{B}$, representing busy routes and almost deserted routes respectively, differ by several orders of magnitude. In this work, we remove the low weight edges to simplify the network structure (i.e., setting the entries of $\bar{B}$ below a threshold to zero). The irreducibility of the new obtained adjacency matrix $B$ is guaranteed by the fact that the neighbouring areas always communicate sufficiently in Japan. Thus, we obtain a strongly connected subgraph $\mathcal{G}_D$ of $\bar{\mathcal{G}}_P$. Furthermore, the recovery rate matrix $\Delta$ is derived from the Physician Maldistribution Index of Japan in 2022, provided by  the Ministry of Health, Labour and Welfare \cite{ministry2022}, which reflects how health resources are distributed across population and geographical area.


In the social network depicted in Fig.~\ref{fig2}, the links signify individual opinion communication between prefectures. Thanks to the developed Internet and social networking services, individuals from different regions can communicate with each other almost equally. Therefore, we simulate this prefectural social model using the Watts-Strogatz model \cite{watts1998collective} to generate a small-world network with parameters $n$, $d$, and $c$ representing the network size, the average degree, and the clustering coefficient, respectively. We set $n=46$, $d=10$ and $c=0.5$ to generate the network in this section. In such a small world network, most communities are not neighbors of one another, but most communities can be reached from every other communities by a small number of hops, which reflects the sociological phenomenon of six degrees of separation \cite{guare2016six}. 

Finally, recalling Section \ref{section3A}, we consider the well posedness of the simulation data and parameters. Considering that $\theta_i(o_i(k))$ and $\gamma_i(o_i(k))$ are monotonic functions, we choose $\theta_i(o_i(k)) = 0.2 + 0.3o_i(k)$ and $\gamma_i(o_i(k)) = 0.4 - 0.4o_i(k)$. The original physical adjacency matrix $B$ is normalized to satisfy $B \boldsymbol{1}_n = 0.5 \boldsymbol{1}_n$. Moreover, to clarify in the figures to be shown, we display the dynamics of 5 randomly selected communities from the whole network, along with the average values of all communities (shown as a thick black dotted line).

\begin{figure} [t!]
	\centering
	\subfloat[\label{1a}]{
		\includegraphics[scale=0.32]{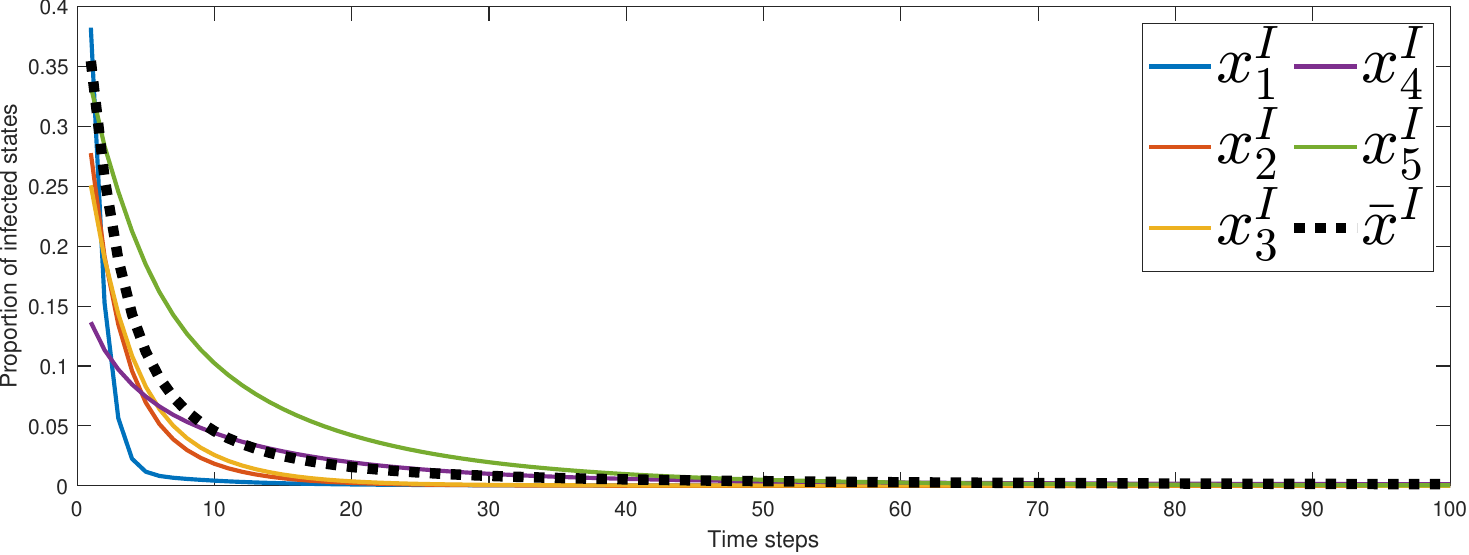}}
	\\
	\subfloat[\label{1b}]{
		\includegraphics[scale=0.32]{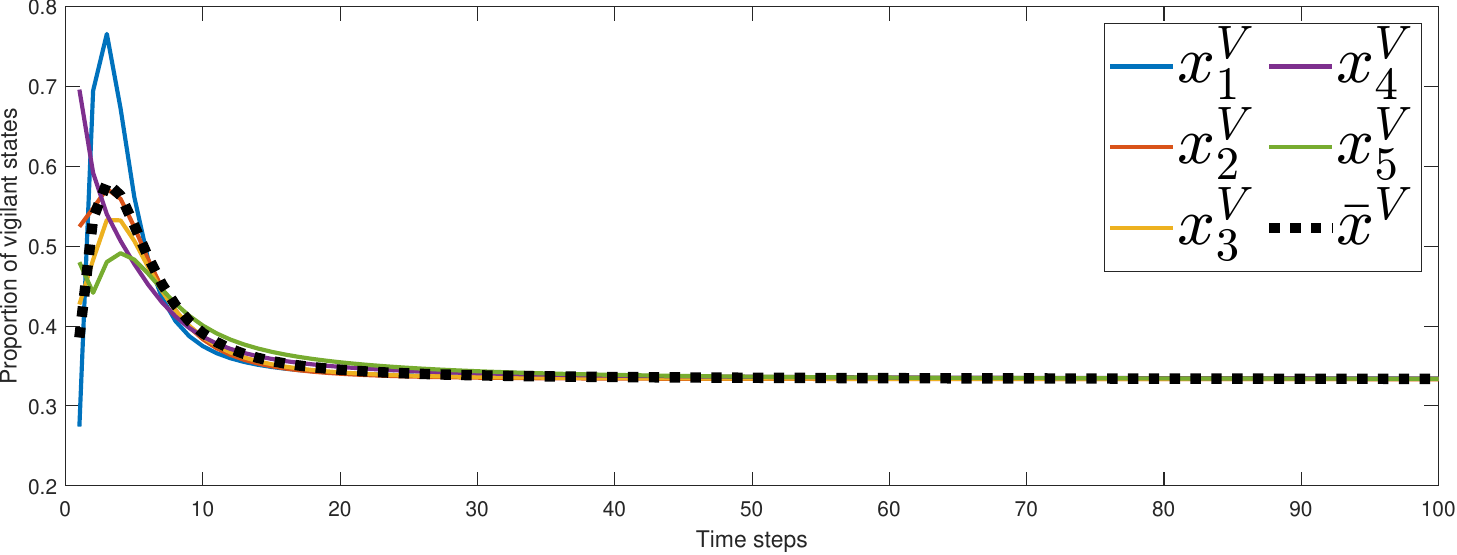} }
	\\
	\subfloat[\label{1c}]{
		\includegraphics[scale=0.32]{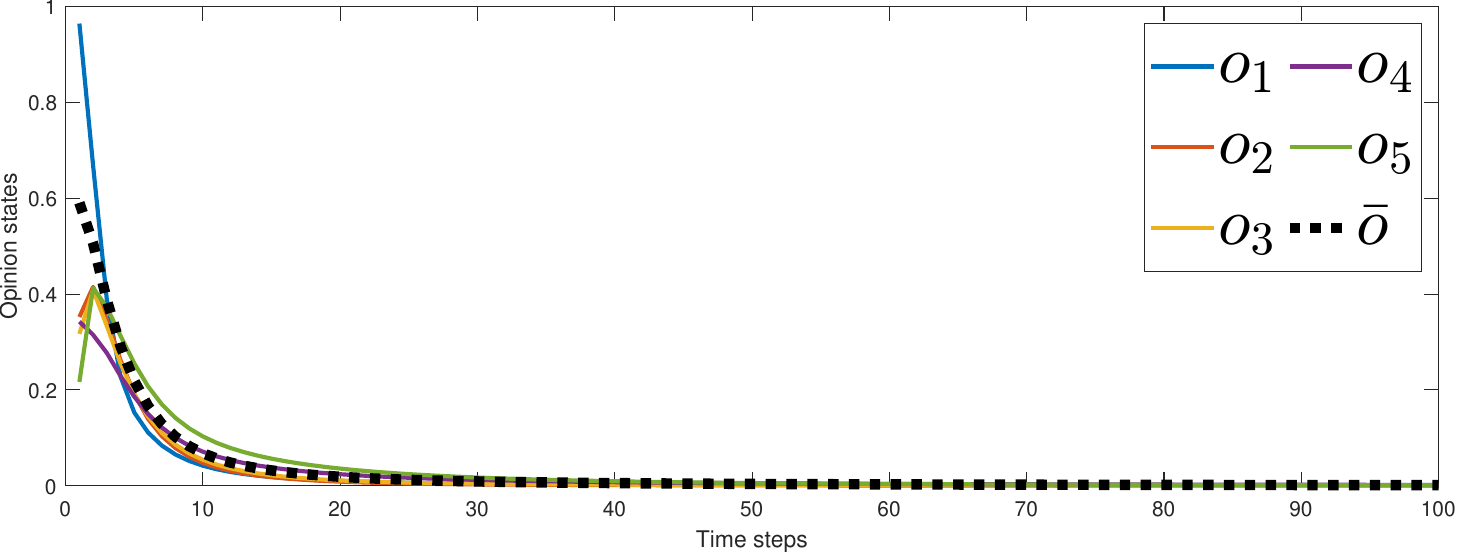} }
	\caption{Under a mild epidemic with $R_o^V = 0.9956$, the evolution of the coupled SIV-opinion system for the $46$ communities network in Fig.~\ref{fig1} and \ref{fig2}. (a) The infected states converge to zero. (b) The vigilant states converge to $0.3333$. (c) The opinion states reach consensus and converge to zero.}
	\label{fig3} 
\end{figure}

\begin{figure}
	\begin{center}
		\includegraphics[width=3.2in]{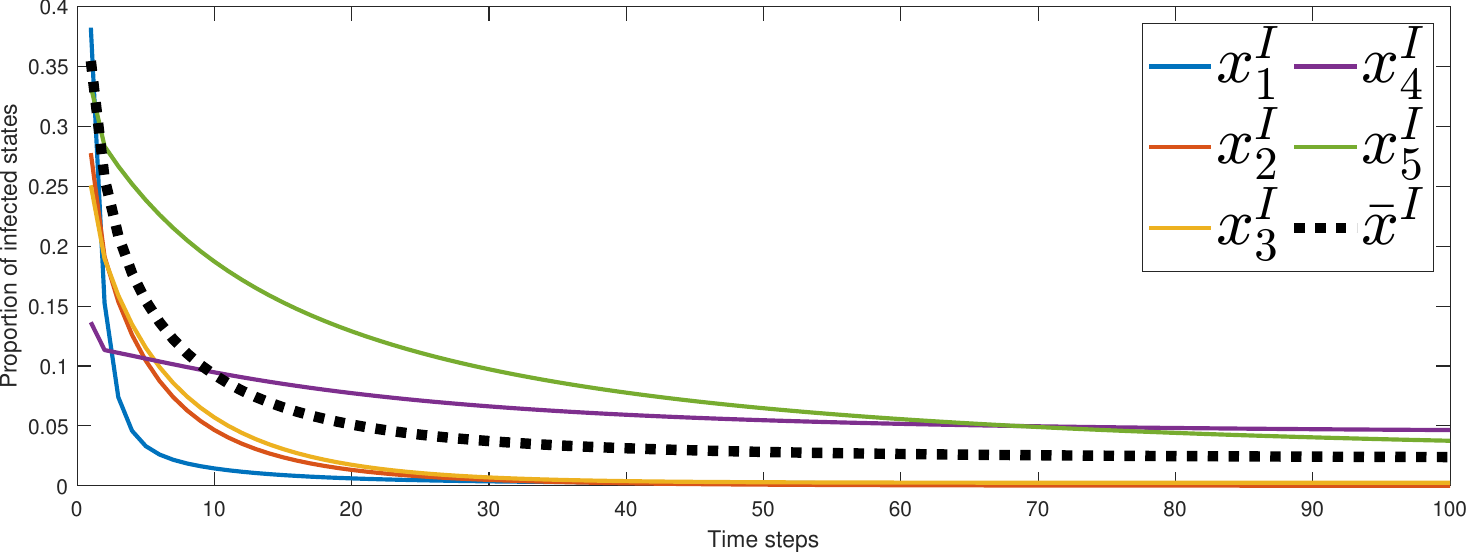}    
		\caption{Under the same mild epidemic as Fig.~\ref{fig3}, the evolution of the infected populations under the SIS model without the effect of opinions.}  
		\label{fig4}                                 
	\end{center}                                 
\end{figure}

\subsection{Mild Epidemics}
First, we simulate the evolution of a mild epidemic with low infectivity, using a scaled adjacency matrix of $0.4B$. Then according to Definition \ref{definition}, we obtain the SIV-opinion reproduction number of $R_o^V = 0.9956$. The initial epidemic-opinion states are generated randomly following Assumption \ref{assumption1}. As Lemma \ref{lemma-1} implies, the opinions of all communities finally converge to a consensus that the epidemic is not serious when the epidemic fades away, as shown in Figs.~\ref{1a} and \ref{1c}. Moreover, when $R_o^V \leq 1$, Theorem \ref{main result-health} states that all the communities converge to a health-consensus equilibrium, which can be computed as $(0, 0.3333\boldsymbol{1}_n, 0)$ for this example. From the plots in Fig.~\ref{fig3}, we confirm this theoretical result.

Additionally, for comparison, we examine epidemic spreading without the vigilance induced by opinion, that is, the discrete-time SIS model. Using the same parameter as in the previous case, the reproduction number is computed as $R_0 = 1.3389$ \cite{pare2020analysis}. In this case, all communities converge to an endemic equilibrium, as illustrated in Fig.~\ref{fig4}. In other words, the SIS model will overestimate the severity of epidemics without taking into account the opinion dynamics, which may result in unnecessary panic in real-world epidemic prevention.

\begin{figure} [t!]
	\centering
	\subfloat[\label{2a}]{
		\includegraphics[scale=0.32]{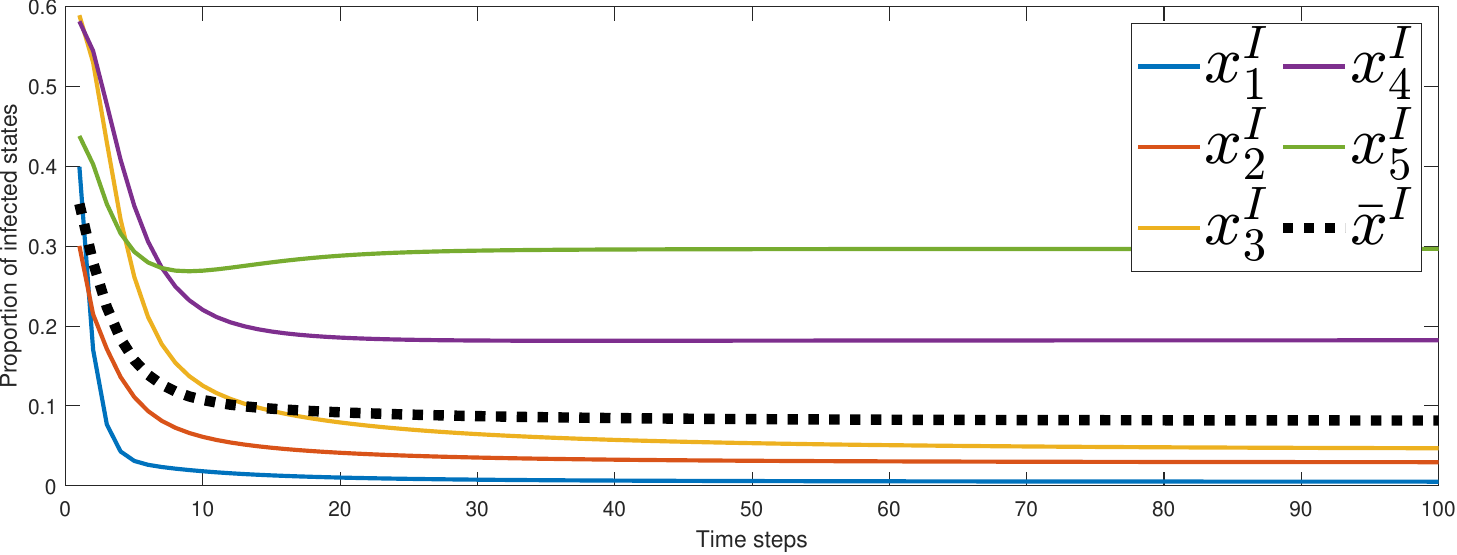}}
	\\
	\subfloat[\label{2b}]{
		\includegraphics[scale=0.32]{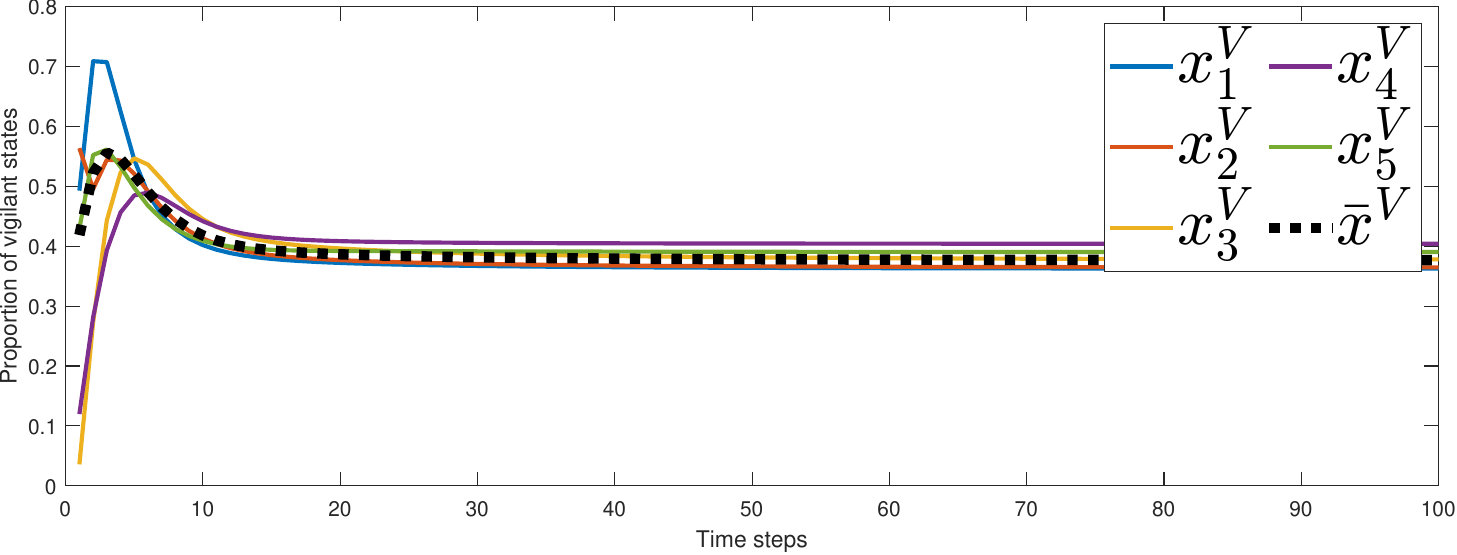} }
	\\
	\subfloat[\label{2c}]{
		\includegraphics[scale=0.32]{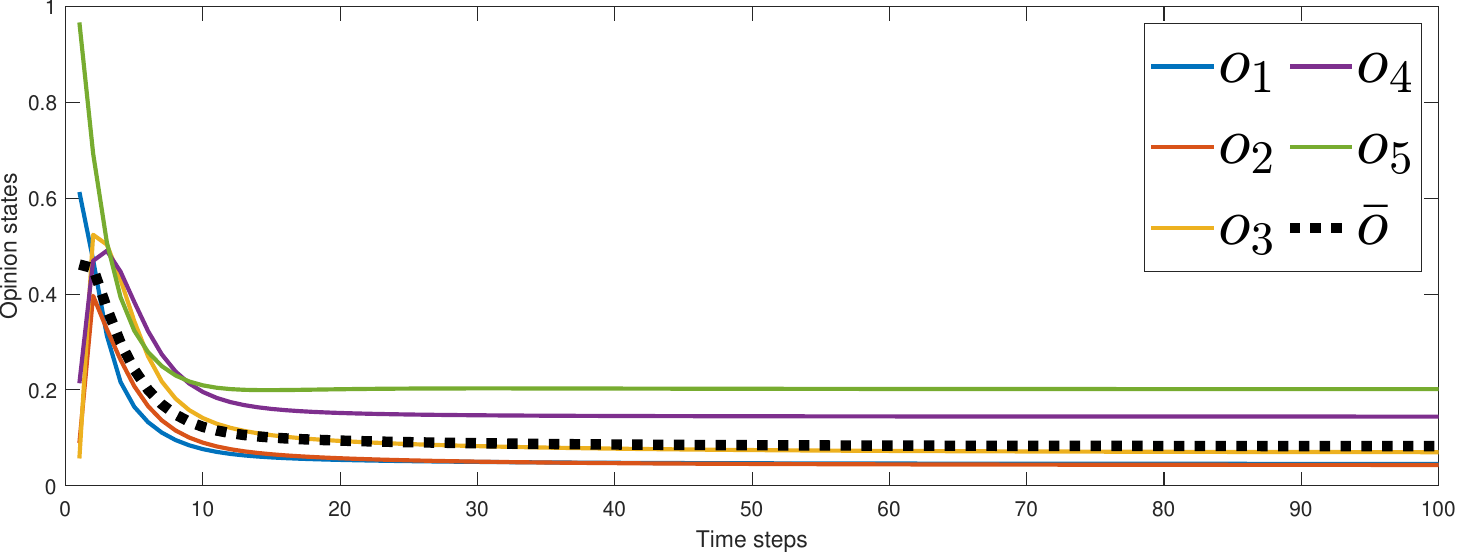} }
	\caption{Under a severe epidemic with $R_o^V = 1.1827$, the evolution of the coupled SIV-opinion system for the $46$ communities network in Fig.~\ref{fig1} and \ref{fig2}. (a) The infected states reach an endemic equilibrium. (b) The vigilant states converge to an equilibrium. (c) The opinion states reach dissensus.}
	\label{fig5} 
\end{figure}

\begin{figure} [t!]
	\centering
	\subfloat[\label{3a}]{
		\includegraphics[scale=0.32]{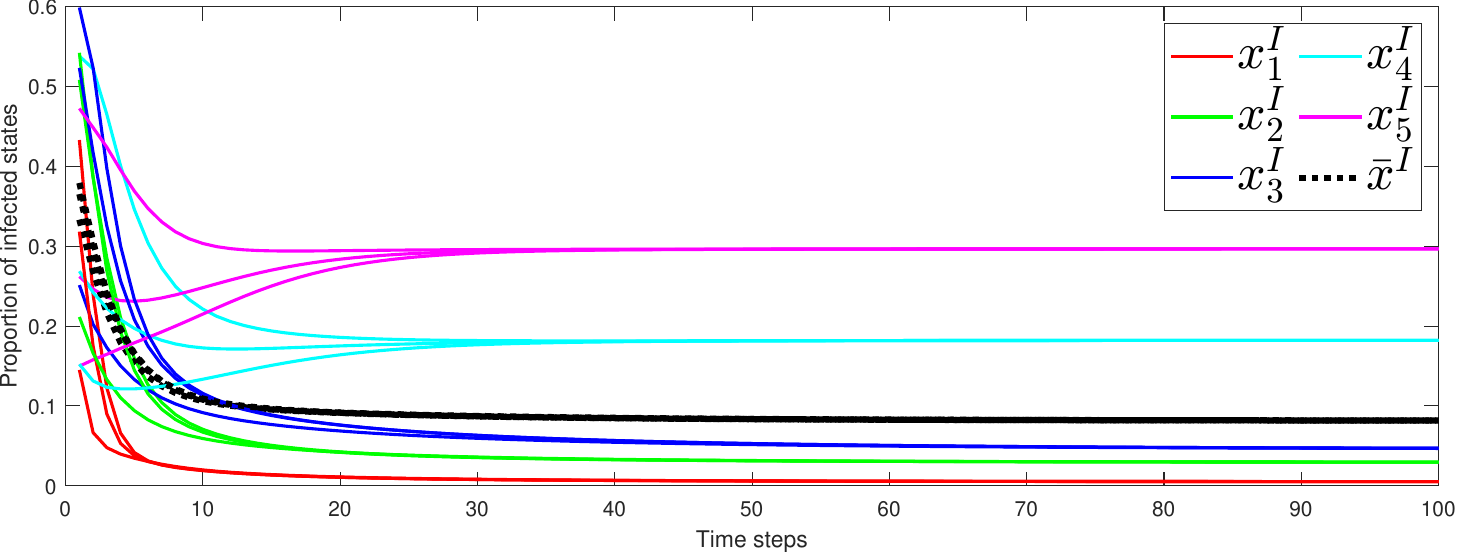}}
	\\
	\subfloat[\label{3b}]{
		\includegraphics[scale=0.32]{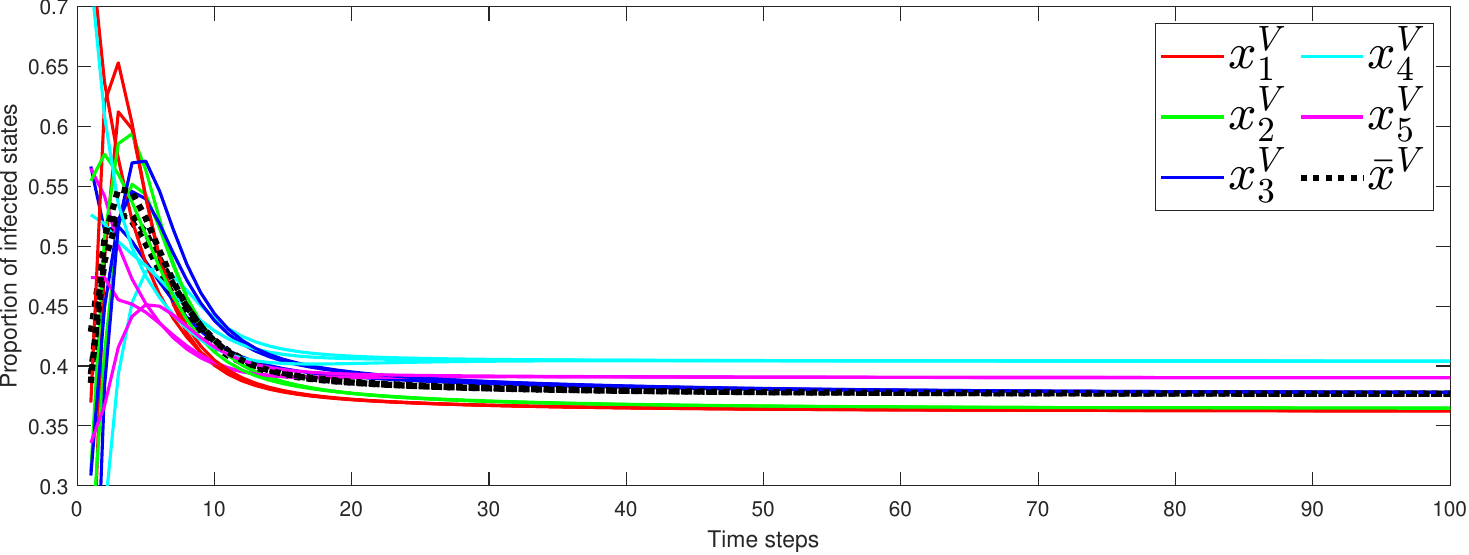} }
	\\
	\subfloat[\label{3c}]{
		\includegraphics[scale=0.32]{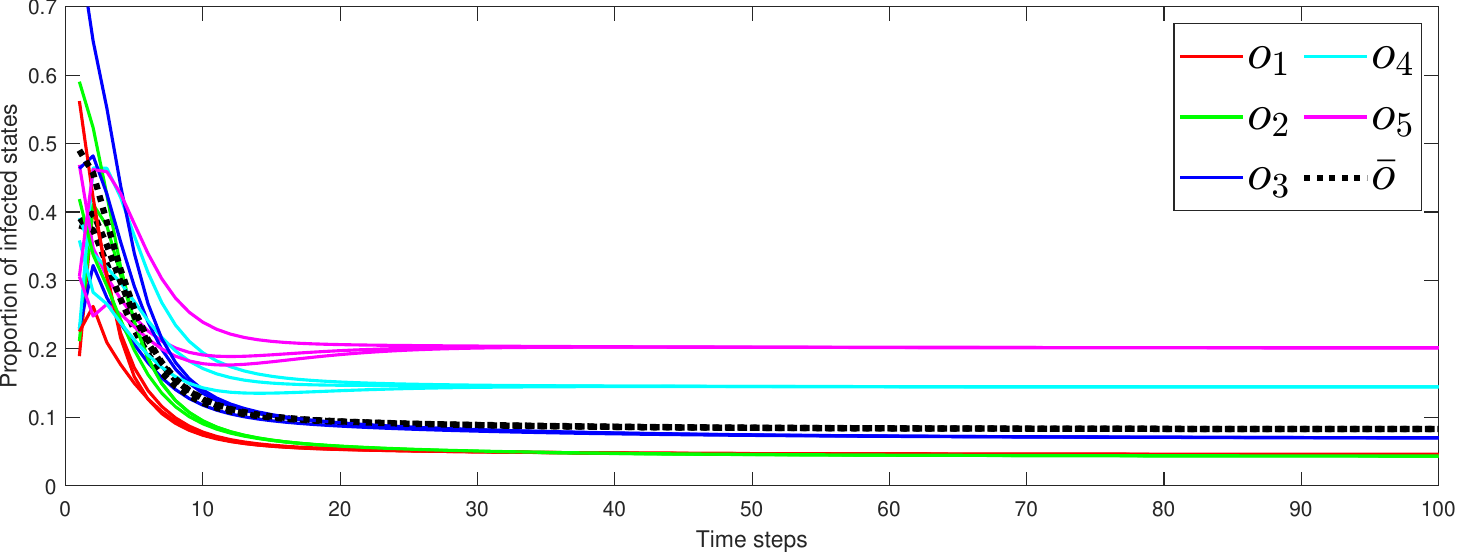} }
	\caption{Under the same condition as Fig.~\ref{fig5}, the evolution of the coupled SIV-opinion system for 3 different initial conditions. The 3 trajectories of each community are depicted by the same color. The system states converge to the same endemic-dissensus equilibrium, independent of the initial condition. (a) Infected states. (b) Vigilant states. (c) Opinion states.}
	\label{fig6} 
\end{figure}

\subsection{Severe Epidemics}
The evolution of a severe epidemic with an adjacency matrix $B$ is illustrated in Fig.~\ref{fig5}. We can obtain the SIV-opinion reproduction number $R_o^V = 1.1827$ following Definition \ref{definition}. As shown in Figs.~\ref{2a} and~\ref{2c}, the coupled SIV-opinion system converges to a dissensus-endemic equilibrium. That is, none of the communities reaches a disease-free state (${x_i^I}^* = 0$) or thinks the epidemic is not a threat (${o_i}^* = 0$), which is consistent with Corollary \ref{corollary-endemic}.

Further, we confirmed through many simulations that the dissensus-endemic equilibrium appears to be unique under different initial conditions. In Fig.~\ref{fig6}, using the same parameters as in Fig.~\ref{fig5}, we start the system with 3 different initial conditions. We can observe that the states converge to the same dissensus-endemic equilibrium, which implies that this equilibrium may have a large region of attraction. We can verify that the equilibrium $z^*$ in Fig.~\ref{fig5} is locally exponentially stable by substituting $z^*$ into Theorem~\ref{theorem-endemic}. Based on $10^5$ Monte Carlo simulations, the stability radius of equilibrium $z^*$ is not less than $0.14$. A challenging question is to find specific conditions for the uniqueness and the region of attraction of dissensus-endemic equilibria theoretically. Theorem \ref{theorem-endemic} provides a sufficient condition, which may be conservative. Tighter conditions remain to be explored in future work.

\subsection{Additional Simulations for Potential Control Strategies}
This part contains additional simulations, which illustrate the feasibility and effectiveness of eradicating epidemics by applying control strategies that affect the reproduction number $R_o^V$, as mentioned in Section \ref{Disease-Free Equilibrium}. Since a severe epidemic may be difficult to eradicate, we consider a moderate epidemic with an adjacency matrix scaled by $0.7B$ with the corresponding reproduction number $R_o^V = 1.0891$.

One of the factors that affect $R_o^V$ is the quality of public health, which can regulate the range of $\psi(o(k))$. For instance, given the same opinions on the epidemic, the people in a community with better sanitary conditions and more established public health policies will be more likely to achieve and maintain the vigilance. This can be characterized by taking larger $\theta$ and smaller $\gamma$. To quantify the levels of public health, let $\epsilon \in [0, 0.3]$ denote the public health parameter. Then, we introduce
\begin{equation} \label{control1}
\begin{aligned} 
	\theta_i(o_i(k)) &= 0.2 + \epsilon + (0.3 -\epsilon)o_i(k), \\
	\gamma_i(o_i(k)) &= 0.4 - \epsilon - (0.4-\epsilon)o_i(k).
\end{aligned}
\end{equation}

Another factor that affects $R_o^V$ is the attention and alertness to the epidemic shown by the governments/administrators of the communities, which can regulate the domain of $\psi(o(k))$. For instance, a cautious government will declare states of emergency and strengthen epidemic prevention publicity in the face of epidemics, which can raise the overall opinions of the public. In other words, residents of an alert community will always maintain a certain level of seriousness about the epidemic. To this end, let $\tau \in [0, 1]$ denote the alertness parameter, and then we extend the opinion dynamics (\ref{opinion-coupled}) under control strategies as
\begin{equation} \label{control2}
	\begin{aligned}
		o_i(k+1)=&\ \tau + (1-\tau)(\phi_i x_i^I(k) + (1-\phi_i)\left[o_i(k), \right.\\
		&+ (1-o_i(k))\sum_{j \in \mathcal{N}_i^O} w_{i j} (o_j(k)-o_i(k))\left.\right]),
	\end{aligned}
\end{equation}
where $\tau = 0$ corresponds to the original one in (\ref{opinion-coupled}).

The equilibria of infected states for $\tau=0$, $\epsilon$ ranging from $0$ to $0.3$, and for $\epsilon=0$, $\tau$ ranging from $0$ to $0.4$ are shown in Fig.~\ref{fig7}. It can be seen that the epidemic states reach an endemic equilibrium without any interventions, while with the increase of $\epsilon$ and $\tau$, the equilibrium converges to zero gradually; it means that all the communities reach the healthy state due to the control strategies.

\begin{figure} [t!]
	\centering
	\subfloat[\label{4a}]{
		\includegraphics[scale=0.32]{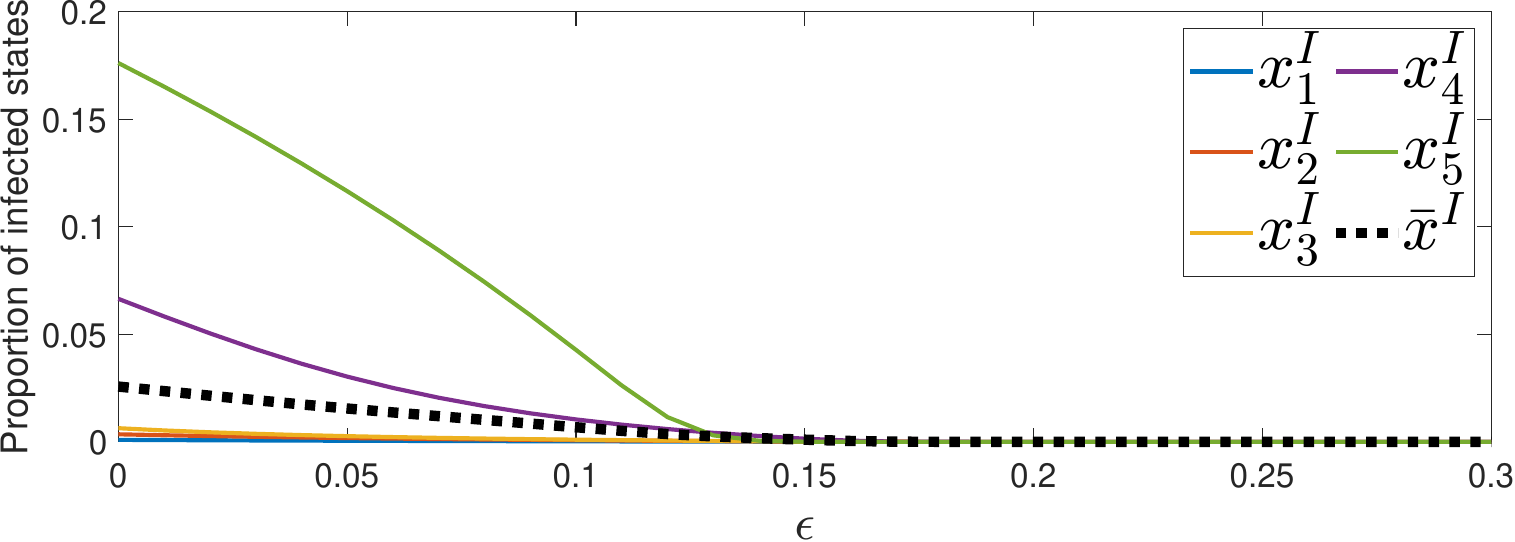}}
	\\
	\subfloat[\label{4b}]{
		\includegraphics[scale=0.32]{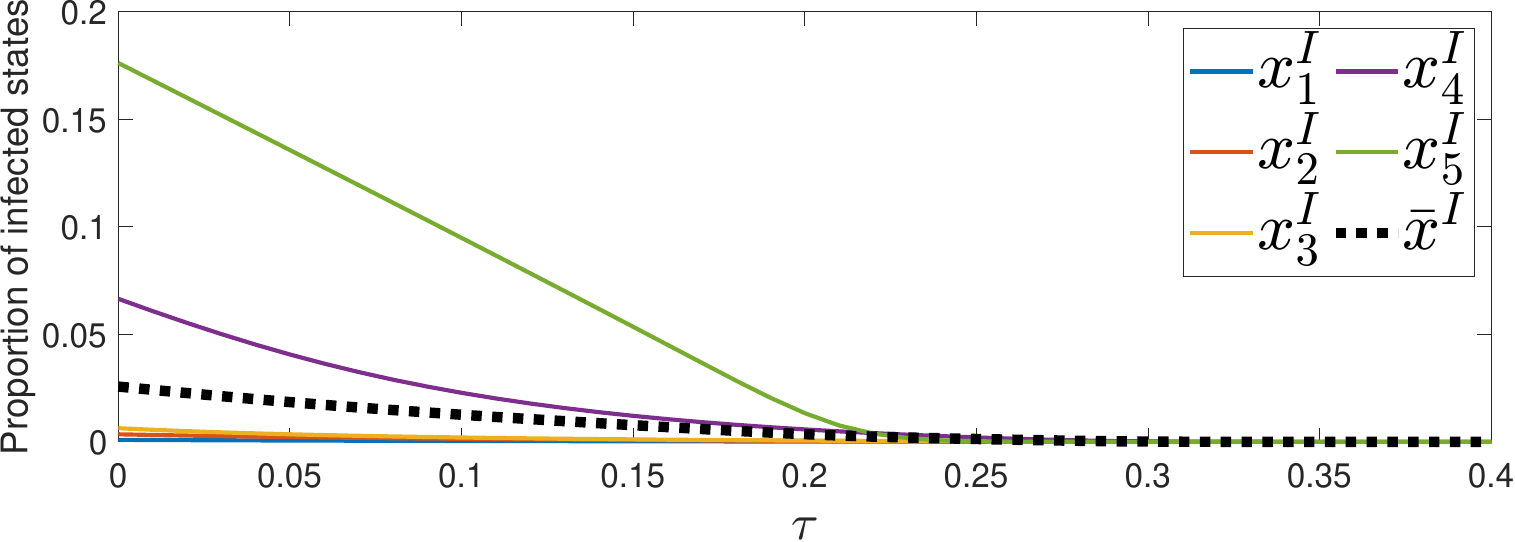} }
	\caption{Under a moderate epidemic with $R_o^V = 1.0891$, the evolution of steady infected states $\lim _{k \rightarrow \infty} x_i^I(k)$ for variable $\epsilon$ and $\tau$. (a) $\tau=0$, $\epsilon$ ranges from $0$ to $0.3$. (b) $\epsilon=0$, $\tau$ ranges from $0$ to $0.4$.}
	\label{fig7} 
\end{figure}

Further, since the above simulation controls every community, a natural question is, whether the control of particular communities can also reach the same target. A direct and qualitative idea is that communities with the largest in-degrees (corresponding to central provinces and transportation hubs in the real world) or the lowest recovery rate (corresponding to under-developed area in the real world) should be critical. We then repeat a similar simulation as in Fig.~\ref{fig7}, but the control strategies (\ref{control1}) and (\ref{control2}) are only applied to the communities with the largest $5$ in-degrees and the lowest $5$ recovery rates, as illustrated in Fig.~\ref{fig8}. The curves show similar convergence to the disease-free equilibria as in Fig.~\ref{fig7}, which verifies the effectiveness of controlling a subset of communities. It is an interesting and significant question to derive the optimal control strategies analytically, including the choices of communities and heterogeneous $\epsilon$ and $\tau$ for each community, which remains a research direction for future work.

\begin{figure} [t!]
	\centering
	\subfloat[\label{5a}]{
		\includegraphics[scale=0.32]{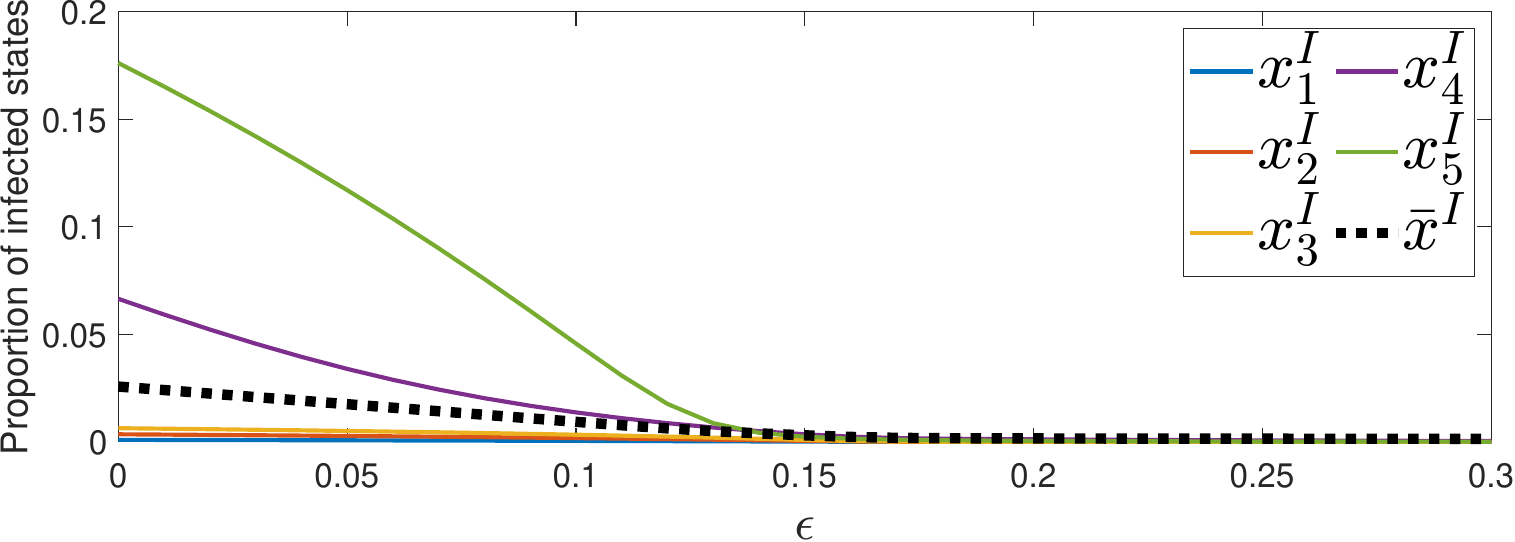}}
	\\
	\subfloat[\label{5b}]{
		\includegraphics[scale=0.32]{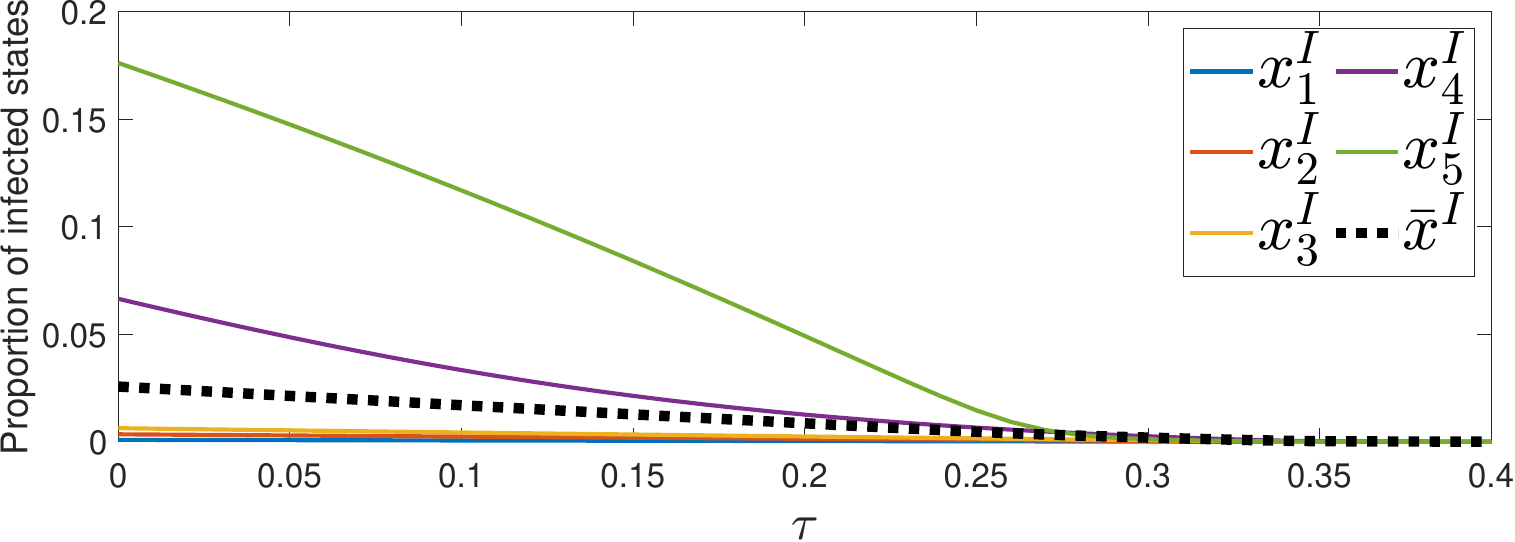} }
	\caption{Under the same condition as Fig.~\ref{fig7}, the evolution of steady infected states $\lim _{k \rightarrow \infty} x_i^I(k)$ with 10 particular communities controlled by variable $\epsilon$ and $\tau$. (a) $\epsilon$ ranges from $0$ to $0.3$. (b) $\tau$ ranges from $0$ to $0.4$.}
	\label{fig8} 
\end{figure}

\section{Conclusion} \label{Section5}
This paper has studied the discrete-time networked SIV epidemic model with polar opinion dynamics. By analyzing this coupled model, we have studied the behavior of epidemic spreading processes, which can be influenced both physically and socially in the real world. In particular, by introducing an SIV-opinion reproduction number, we have obtained sufficient conditions for the stability of disease-free equilibrium and endemic equilibrium. The results reveal the role of opinion dynamics in epidemic spreading, and suggest the possibility of preventing and controlling the epidemic by social interventions. Numerical simulations have been performed to support the theoretical results and some insights on epidemic control. 

For future work, we may consider the rigorous theoretical analysis of optimal control strategies for epidemic eradication, in which many realistic factors, such as resource allocation, partial network control, stubbornness of nodes, heterogeneous or adaptive strategies for each node/edge, are worth exploring. 

\bibliographystyle{IEEEtran}
\bibliography{brief}

\end{document}